\newcommand{\Comment}[1]{\relax}
\newcommand{\Hide}[1]{}
\newif
\newif
\newtheorem{theorem}{Theorem}
\newtheorem{lemma}[theorem]{Lemma}
\newtheorem{restxxx}[theorem]{Restriction}
\newtheorem{agreexxx}[theorem]{Agreement}
\newtheorem{termxxx}[theorem]{Terminology}
\newtheorem{notxxx}[theorem]{Notation}
\newtheorem{assumxxx}[theorem]{Assumption}
\newtheorem{convenxxx}[theorem]{Convention}
\newtheorem{exaxxx}[theorem]{Example}
\newenvironment{example}{\begin{exaxxx}\rm}{\hfill\QED\end{exaxxx}}
\newtheorem{exexxx}[theorem]{Exercise}
\newtheorem{remxxx}[theorem]{Remark}
\newtheorem{openxxx}[theorem]{Open Problem}
\newtheorem{conjxxx}[theorem]{Conjecture}
\newtheorem{defxxx}[theorem]{Definition}
\newenvironment{definition}[1]{\begin{defxxx}[\emph{#1}]\rm}%
{\hfill\QED\end{defxxx}}
\newtheorem{procxxx}[theorem]{Procedure}
\newenvironment{procedure}[1]{\begin{procxxx}[\emph{#1}]\rm}%
{\hfill\QED\end{procxxx}}
\newtheorem{Prxxx}[theorem]{Proof}
{\end{Prxxx}} 
\newcommand{\Parens}[1]{\bigl(#1\bigl)} 
\newcommand{\Set}[1]{\{ #1 \}}
\newcommand{\bigO}[1]{{\cal O}\bigl(#1\bigr)} 
\newcommand{\Let}[3]%
    {\textbf{\textsf{let}}\ {#1}\,{#2}\ \textbf{\textsf{in}}\;{#3}\,}
\newcommand{\Try}[3]%
    {\textbf{\textsf{try}}\ {#1} {#2}\ \textbf{\textsf{in}}\;{#3}\;}
\newcommand{\Mix}[3]%
    {\textbf{\textsf{mix}}\ {#1} {#2}\ \textbf{\textsf{in}}\;{#3}\;}
\newcommand{\LET}[3]%
    {\textbf{\textsf{let}}^{\bm{*}}\ {#1} {#2}\ \textbf{\textsf{in}}\;{#3}\;}
\newcommand{\Letrec}[3]%
    {\textbf{\textsf{letrec}}\ {#1} {#2}\ \textbf{\textsf{in}}\;{#3}\;}
\newcommand{\ie}{\textit{i.e.}}
\newcommand{\QED}{{\Large $\square$}}
\newcommand{\size}[1]{|\,#1\,|}
\newcommand{\spacing}[2]{
  \renewcommand{\baselinestretch}{#2}
  \small\normalsize #1
  \setlength{\parskip}{0.1\baselineskip}
  \settowidth{\parindent}{xxxx}
  \setlength{\parindent}{#2\parindent}
  \setlength{\leftmargini}{\parindent}
  \setlength{\leftmarginii}{\parindent}
  \setlength{\leftmarginiii}{\parindent}
  \setlength{\footnotesep}{#2\footnotesep}
}
\begin{document}
\spacing{\normalsize}{.98}
\setcounter{page}{1}     
\setcounter{tocdepth}{1} 
\ifTR
  \pagenumbering{roman} 
\else
\fi

\title{Shortest path and maximum flow problems in planar flow networks with additive gains and losses}

\author{Saber Mirzaei %
           \thanks{Partially supported by NSF awards CNS-1135722 and ECCS-1239021.} \\
        Boston University \\
        \ifTR Boston, Massachusetts \\
        \href{mailto:kfoury@bu.edu}{kfoury{@}bu.edu}
        \else \fi
\and
       Assaf Kfoury %
          \footnotemark[1]\\
       Boston University  \\
        \ifTR Boston, Massachusetts \\
        \href{mailto:smirzaei@bu.edu}{smirzaei{@}bu.edu}
        \else \fi
}

   \date{\today}
\maketitle
  \ifTR
     \thispagestyle{empty} 
  \else
  \fi

\vspace{-.3in}
  \begin{abstract}
  In contrast to traditional flow networks, in additive flow networks,
to every edge $e$ is assigned a gain factor $g(e)$ which represents
the loss or gain of the flow while using edge $e$.  Hence, if a flow
$f(e)$ enters the edge $e$ and $f(e)$ is less than the designated
capacity of $e$, then $f(e) + g(e) \ge 0$ units of flow reach the end
point of $e$, provided $e$ is used, \ie, provided $f(e)\neq 0$.  In
this report we study the maximum flow problem in additive flow
networks, which we prove to be NP-hard even when the underlying graphs
of additive flow networks are planar.  We also investigate the
shortest path problem, when to every edge $e$ is assigned a cost value
for every unit flow entering edge $e$, which we show to be NP-hard in
the strong sense even when the additive flow networks are planar.

  \end{abstract}

\ifTR
    \newpage
    \tableofcontents
    \newpage
    \pagenumbering{arabic}
\else
    \vspace{-.2in}
\fi
\section{Introduction}
\label{sect:Introduction}
  In traditional flow network problems, such as the max-flow problem,
it is assumed that if $f(e)$ units of flow enter the edge $e=(v, u)$
at its tail $v$, exactly $f(e)$ units will reach its head $u$.
In practice this assumption in many flow models does not hold. For instance, in the well-known
\emph{generalized flow networks}, if $f(e)$ units of flow enter $v$, and a
gain factor $g(e)$ is assigned to $e$, then $g(e)\times f(e)$ units reach $u$.
Depending on the application, the gain factor can represent the
loss or gain due to evaporation, energy dissipation, interest,
leakage, toll or \emph{etc}.

The \emph{generalized maximum flow} problem
has been widely studied. Similar to the standard max-flow problem,
generalized max-flow problem can be formulated as a linear programming, and therefore
it can be polynomially solved using different approaches such as
modified simplex method, or
the ellipsoid method, or the interior-point methods.
Taking advantage of the structure of the problem,
different general purpose linear programming algorithms have been
tailored to speed up the calculation of max-flow in generalized flow networks~\cite{kamath1995improved,kapoor1996speeding,murray1992interior}.

The strong relationship between generalized max-flow problem and \emph{minimum cost flow} problem
was first recognized and established by Truemper in~\cite{truemper1977max}.
Exploiting this relationship and more importantly the discrete structure of the underlying graph, the generalized max-flow problem
can also be solved in polynomial time by \emph{combinatorial methods}~\cite{Vegh2014, goldfarb2002combinatorial, onaga1966dynamic,goldberg1991combinatorial,goldfarb2002polynomial,tardos1998simple}.

In contrast to the well-studied generalized flow networks, recently in~\cite{brandenburg2011shortest}, the authors
introduced and investigated flow networks where an additive fixed gain factor is
assigned to every edge \emph{if used}. Flow networks with additive gains and losses (\emph{additive flow networks} for short) have several applications in practice.
In communication networks, a fixed-size load is added to every package being sent out by routing nodes
in the network. In transportation of goods or commodities, a fixed amount may be lost
in the transportation process or a flat-rate amount of other commodities or cost may be added to
the commodity passing every toll station. In financial systems there are fixed costs or losses for
every transaction.

The max-flow problem in additive networks can be views as the problem of finding a
feasible flow which either maximizes the amount of flows departing the source vertices
or maximizes the amount of flow reaching the sink vertices (respectively called \emph{maximum in-flow} and \emph{maximum out-flow} problems).
Similarly, assuming that a \emph{unit flow} is departing a source vertex, the shortest path problem is the problem of finding
a feasible flow along a path from the source vertex to the sink vertex with minimum accumulated cost.

As explained in~\cite{brandenburg2011shortest}, flow networks with additive gains and losses,
are different in many aspects from standard and generalized flow networks due to
some properties such as \emph{flow discontinuity}, lack of
max-flow/min-cut \emph{duality}, and unsuitability of \emph{augmented path} methods.
Similarly, some basic properties of the shortest path in standard flow network do not hold in the additive case.
For instance, it is not anymore the case that the sub-path, the prefix or the suffix of a shortest path must themselves be shortest.
For more details on the properties of additive flow networks, we refer the reader to~\cite{brandenburg2011shortest}.
These differences make both the max-flow problem and the shortest path problem hard to solve for additive flow networks.
Precisely speaking, the authors of the same paper show that the shortest path problem and the maximum in/out-flow problems are NP-hard for general graphs.
In this paper we extend their results to the case where the underlying graph of the flow network is planar.

\vspace{-12 pt}
\paragraph{Organization of the Report.} In Section~\ref{sect:definitions} we give precise formal definitions of several notions
regarding the additive flow networks and the corresponding problems.
Section~\ref{sect:shortestPath} concerns the shortest path problem in the additive flow networks; in this section we show that this problem is NP-hard in the strong sense when the underlying
graph of the network is planar and Section~\ref{sect:maxFlow} extends the NP-hardness of maximum in/out-flow problems
to planar additive flow networks. Finally, Section~\ref{sect:future} is a brief preview of future work.

\section{Definitions and preliminaries}
\label{sect:definitions}
  A flow network with additive losses and gains (\emph{additive flow network} for short) is defined by a tuple $N=(V, E, S, T, u, c, g)$, where $G=(V, E)$ is the directed underlying graph with
$n = |V|$ vertices and $m = |E|$ edges. The two sets $S, T \subset V$ are respectively the designated set of source and sink vertices.
$u: E \rightarrow \mathbb{R}^{+}$ is the edge capacity function, while $c: E \rightarrow \mathbb{R}$ is the cost function and $g:E \rightarrow \mathbb{R}$ assigns a gain or loss value to every edge \emph{if used}.
Precisely speaking, the cost $c(e)$ per units of flow and the gain factor $g(e)$ are applied, only if a positive flow enters the edge $e$.
\begin{definition}{Incoming and outgoing edges} Consider the directed graph $G=(V, E)$ and $v \in V$. Two functions $in:V \rightarrow 2^{E}$ and $out:V \rightarrow 2^{E}$ respectively represent the set of incoming edges and the set of outgoing edges for every vertex. Formally, for every $v \in V$:
\begin{align*}
in(v) = \Set{e | e = (u,v) \in E}, \\
out(v) = \Set{e | e = (v, u) \in E}.
\end{align*}
Similarly the \emph{indegree} and \emph{outdegree} functions for every vertex $v \in V$ are respectively defined as $deg^{+}(v) = \size{in(v)}$ and $deg^{-}(v) = \size{out(v)}$.
\end{definition}
As in traditional flow networks, it is assumed that for every source vertex $s \in S$, $deg^{+}(s) = 0$ and for every sink vertex $t \in T$, $ deg^{-}(t) = 0$.
Flow $f: E \rightarrow \mathbb{R}^{+}$ in a network $N$ is feasible if it satisfies edge capacity $0 \le f(e) \le u(e)$ for every edge $e \in E$
and flow conservation constraint at every vertex in $V$. Formally speaking, $f$ satisfies the flow conservation constraints if for every $v \in V - (S \uplus T)$:
\begin{align*}
\sum_{e \in in(v) , f(e) > 0} \max(0, f(e)+g(e)) =  \sum_{e  \in out(v)} f(e).
\end{align*}
Given edge $e = (v,u)$, if flow $f(e)$ exits vertex $v$, then $\max(0, f(e)+g(e))$ units reach $u$. Therefore edge $e =(u, v) \in E$ is lossy if the entering edge $f(e)$ is positive and $g(e) < 0$. Edge $e$ consumes the entering flow if $f(e)+g(e) < 0$, in which case no flow reaches the vertex $v$.
\begin{definition}{Out-flow and in-flow}
Given flow $f$ for network $N$, the \emph{out-flow} is the summation of the amount of flow exiting source vertices, \emph{i.e.}, $f_{out} = \sum_{s \in S, e \in out(s)} f(e)$. Similarly, \emph{in-flow} is the summation of flow values entering all sink vertices, namely $f_{in} = \sum_{t \in T, e \in in(t) , f(e) > 0 , } max(0, f(e) + g(e))$.
\end{definition}
In additive flow networks, the max-flow problem can be studied from the producers' (source vertices) point of view or from the consumers' (sink vertices) point of view. Hence, the maximum out-flow problem is the problem of finding a feasible flow $f$ maximizing $f_{out}$, and the maximum in-flow problem is defined similarly. Regarding these two problems, while trying to maximize the amount of outgoing/incoming flows, we are not concerned with the cost of flow.

On the other hand, in additive flow networks, the shortest path problem can be generalized in several ways. For instance, given a producer vertex $s \in S$ (similarly consumer vertex $t \in T$), the problem can be defined as finding a consumer vertex $t \in T$ (producer vertex $s \in S$) with the shortest distance among the others, with respect to the cost of a unit flow departing $s$ towards $t$. A more generalized variation of the problem can be defined, where neither of the source or destination vertices are fixed. Therefore, the generalized shortest path is the problem of finding a source vertex $s \in S$, a destination vertex $t \in T$, and a path $\Pi$ from $s$ to $t$ with minimum cost, if a unit of flow departs $s$.
Without loss of generality, in the rest of this report, it is assumed that the sets of source and sink vertices each has one member (namely $|S| = |T| = 1$). Hence, every negative result shown for this special case is immediately applicable to the generalized version of the shortest path problem.
Section~\ref{sect:shortestPath} concerns the shortest path problem and the related definitions with more details.

\begin{definition}{Cost of flow}
Given a flow function $f$ in network $N$, the \emph{cost of flow} on every edge $e$ is $f(e) \times c(e)$.
Similarly the \emph{accumulated cost} of $f$ is the summation of cost of flow entering all edges, \emph{i.e.}, $cost(f) = \sum_{e \in E} f(e) \times c(e)$.
\end{definition}
\begin{example}
\label{ex:additiveFlow}
In Figure~\ref{fig:additiveFlow} an additive flow network and a feasible flow from vertex $s$ to vertex $t$ are depicted.
The cost of unit flow for every edge is $1$ and the capacity of every edge is $B+1 $ for $B > 1$.
For a compact illustration of figures in this report, we adopt the following conventions:
\begin{itemize}
  \item If we label an edge $e$ with $g = r$ or $c = r$ or $u = r$, for some $r \in \mathbb{R}$, then we mean that $g(e) = r$ or
  $c(e) = r$ or $u(e)=r$, respectively.
  \item If we omit such a label on an edge $e$, then we mean that the value of the corresponding function for $e$ is the default value (as stated in the description of that figure).
\end{itemize}
For instance in Figure~\ref{fig:additiveFlow}, $g((s,v_1)) = B + 1$ for edge $(s,v_1)$, is represented by $g = B+1$ by that edge. The missing gain values are the default value $0$.
In this example, the initial unit flow leaves
vertex $s$ while $B+2$ units reach $v_1$, due to gain value $B+1$ assigned to edge $(s,v_1)$. The flow entering edge $(v_1,v_5)$ is fully absorbed by the gain factor $-B$ assigned to it. Hence, the accumulated cost of the flow along path\footnote{A simple path can be interchangeably represented both by the set of vertices or by the set of edges taking part in it.} $\Pi=(s,v_1,v_2,v_3,t)$
is $B+4$ while the accumulated cost of the flow $f$ is $B+5$.
\begin{figure}[H]
    \begin{center}
        \includegraphics[scale=0.65]{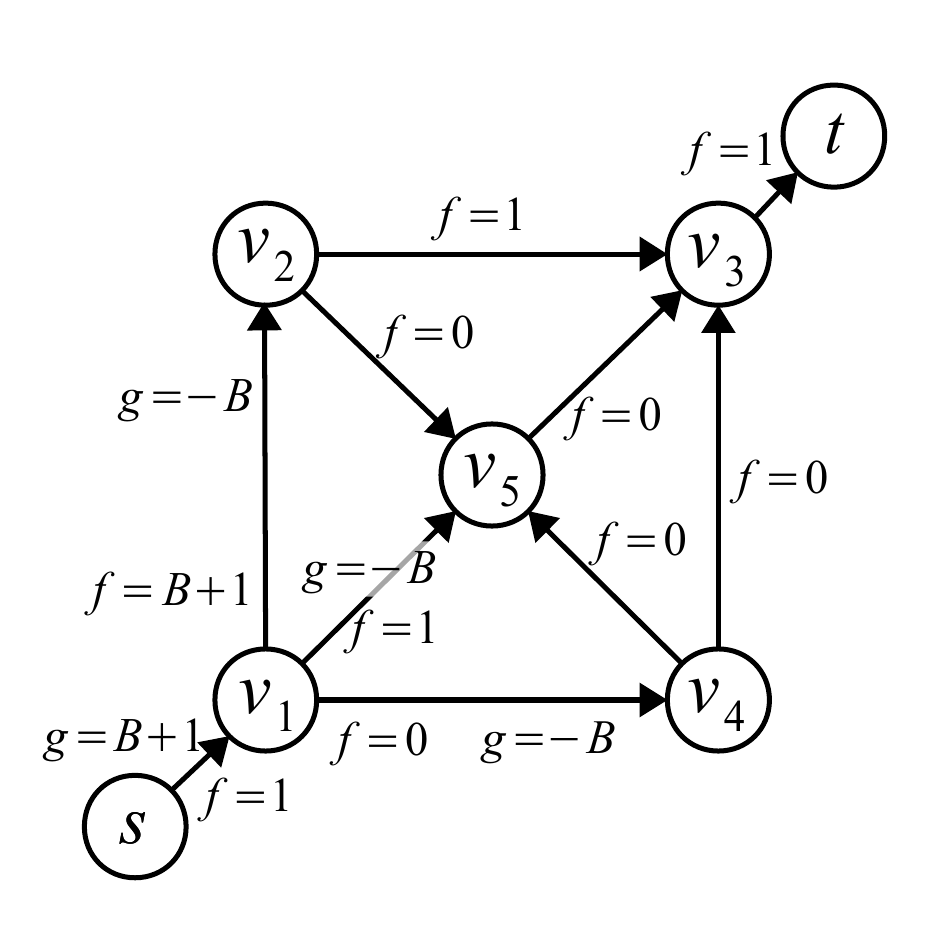}
    \end{center}
    \vspace{-0.4in}
    \caption{An example of an additive flow network $N$ and the flow function $f$ assigning feasible flow values to every edge of $N$.
    Missing gain values are $0$, the cost function $c$ is $1$ on every edge, and the capacity function $u$ is a ''large number'' on every edge (for example $B+1$).}
    \label{fig:additiveFlow}
    \vspace{-0.1in}
\end{figure}
\end{example}

\section{Shortest path problem in additive flow network}
\label{sect:shortestPath}
  Let $N$ be an additive flow network and let $\Pi=\Parens{(s,v_1), (v_1,v_2),\ldots,(v_{k-1},t)}$ be a (simple) path in $N$ from vertex $s$ to $t$.
Let $e_i = (v_{i-1}, v_i)$ for $2 \le i \le k-1$ while $e_1 = (s, v_1)$ and $e_k = (v_{k-1},t)$.
Given the flow $f$ along path $\Pi$ with the initial flow $f_1$ entering $e_1$ (\emph{seed flow} for short), the accumulated flow entering edge $e_i$ for every $1 \le i \le k$, is represented by:
\begin{align*}
\gamma(\Pi,f_1, i) = f_1 + \sum_{j < i} g(e_1).
\end{align*}
\begin{definition}{Feasible and dead-end flows along a path}
Flow  $f$ with the seed flow $f_1$, is \emph{feasible} along path $\Pi = (e_1,\ldots, e_k)$ if the flow on every edge is positive and a positive amount of flow reaches the destination $t$. \emph{i.e.} for $1 \le i \le k+1$:
\begin{align*}
\gamma(\Pi,f_1, i) > 0.
\end{align*}
Flow $f$ is \emph{infeasible} or \emph{dead-end} if an edge $e_i$ for $1 \le i \le k$
absorbs all the entering flow, due to the loss value $g(e_i)$ assigned to that edge. Hence, no flow reaches the destination vertex.
\end{definition}
For every vertex $t$ reachable from $s$ via some path $\Pi$, there exists a threshold value $T$ such that a flow along $\Pi$ is feasible only if its seed flow  $f_1 > T$.
Finding the reachability threshold (if exists) for every pair of source and sink vertices is a polynomial task, as formally stated in Lemma~\ref{lemma:threshold}.
\begin{lemma}
\label{lemma:threshold}
Consider a flow network $N = (V,E, \Set{s}, \Set{t}, u, c, g)$. There exists a polynomial time algorithm that decides the reachability of $t$ from $s$ in network $N$ and finds the reachability threshold in $\bigO{nm}$, if such threshold exists.
\end{lemma}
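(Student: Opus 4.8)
The plan is to treat the two claims separately: deciding reachability of $t$ from $s$, and computing the threshold, reducing the latter to a Bellman--Ford-style shortest-path computation. First I would exploit the fact that each gain $g(e_i)$ is a fixed additive constant, independent of the flow magnitude. Thus $\gamma(\Pi,f_1,i)=f_1+\sum_{j<i}g(e_j)$ is a strictly increasing affine function of the seed $f_1$, so for any fixed directed path $\Pi$ from $s$ to $t$ in $G$, choosing $f_1$ large enough makes every $\gamma(\Pi,f_1,i)$ positive and hence makes $\Pi$ carry a feasible flow. Consequently $t$ is reachable from $s$ if and only if there is a directed $s$--$t$ path in the underlying graph, which a single graph search decides in $\bigO{n+m}$.

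Next I would make the threshold explicit. For a fixed path $\Pi=(e_1,\dots,e_k)$ the feasibility conditions $\gamma(\Pi,f_1,i)>0$ are equivalent to $f_1>-\mu(\Pi)$, where $\mu(\Pi)=\min_{0\le i\le k}\sum_{j\le i}g(e_j)$ is the least prefix sum of gains; the empty prefix contributes $0$, so $\mu(\Pi)\le 0$ and the per-path threshold $-\mu(\Pi)\ge 0$, exactly as the statement preceding the lemma asserts. The reachability threshold for the pair $(s,t)$ is then the smallest seed admitting some feasible $s$--$t$ flow, namely $T=-\max_{P}\mu(P)$ over $s$--$t$ routes $P$. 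The first genuine subtlety is the meaning of ``route'': a feasible flow need not follow a simple path, and I would exhibit a small positive-gain cycle showing that looping to build up flow before a lossy stretch can strictly lower $T$. Since by flow decomposition such circulations are realizable, the maximum defining $T$ must be taken over walks, not merely simple paths.

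To compute $\max_P\mu(P)$ I would run a Bellman--Ford relaxation, viewing $-g(e)$ as an edge ``loss'' so that $-\mu(P)$ is the maximum prefix loss along $P$, a bottleneck quantity I want to minimize over walks. Maintaining for each vertex the best value attainable by walks of length at most $k$ and relaxing all $m$ edges for $k=1,\dots,n-1$ yields the optimum over vertex-simple routes, and a residual relaxation in round $n$ detects the flow-amplifying (positive total gain) cycles of the previous paragraph, which I would handle by propagating a zero downstream threshold once such a cycle is reached feasibly. Each round costs $\bigO{m}$ and there are $\bigO{n}$ rounds, giving $\bigO{nm}$ overall.

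The hard part will be the correctness of the relaxation, because the natural per-vertex state is two-dimensional: both the least prefix sum $\mu$ seen so far (which fixes the current threshold) and the accumulated gain $\sigma$ at $v$ (which governs whether later lossy edges can be survived) matter, and a single scalar does not optimize both at once. I would resolve this by fixing a candidate seed $f_1$ and tracking only the maximum flow level $R(v)$ reachable at $v$ by walks that stay positive throughout; for a fixed seed a single scalar suffices, since only the flow level at $v$, not the history of the walk, controls onward propagation, while reachability of $t$ is monotone in $f_1$. The remaining and most delicate task is to show that this monotone scalar relaxation pins down the exact threshold within $\bigO{n}$ rounds and that positive-gain cycles are incorporated correctly; this convergence-and-cycle analysis is where I expect the real effort to lie.
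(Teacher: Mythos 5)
There is a genuine gap: your proposal never actually delivers the threshold in $\bigO{nm}$ time. You correctly diagnose that a \emph{forward} relaxation from $s$ is inherently two-dimensional --- the least prefix sum $\mu$ and the accumulated gain $\sigma$ of a partial path are incomparable pieces of state, so no single scalar per vertex relaxes correctly --- but your repair (fix a candidate seed $f_1$ and track the maximum attainable flow level $R(v)$) only yields a \emph{decision procedure for one fixed seed}. Recovering the exact threshold from such a procedure would require some further search over real-valued seeds, and you explicitly leave the ``convergence-and-cycle analysis'' open; that deferred step is precisely the content of the lemma. The idea you are missing, and the one the paper uses, is to run the computation \emph{backward}: reverse every edge, assign the reversed edge weight $-g(e)$, and run Bellman--Ford from $t$ with the single modification that every tentative distance is clamped as $\max\{0,d\}$. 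Backward, the per-vertex state really is one-dimensional: $T(v)$ is the minimum amount of flow that must be present at $v$ in order to reach $t$ feasibly, and it satisfies $T(v)=\min_{e=(v,u)}\max\{0,\,T(u)-g(e)\}$, since the flow needed at $v$ depends only on the flow needed at $u$, not on any history. The dominance problem you ran into never arises, and $n-1$ rounds of clamped relaxation give the threshold in $\bigO{nm}$.

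Your insistence that the threshold must range over walks rather than simple paths also works against you. In the paper, the reachability threshold is defined for flows \emph{along (simple) paths} --- this is the quantity used to set the seed flow in the shortest-path definition --- and the paper disposes of cycles by noting that no positive-gain cycle is involved (so the reversed graph has no negative cycle), which is exactly the hypothesis that makes clamped Bellman--Ford stabilize after $n-1$ rounds. If positive-gain cycles and walks are admitted, the clamped relaxation is no longer guaranteed to converge within $n-1$ rounds: a cycle of total gain $+\epsilon$ feeding a vertex whose tentative threshold is $B$ needs on the order of $B/\epsilon$ traversals before clamping to $0$, a pseudo-polynomial number of rounds. So the generalization you introduce both departs from the statement being proved and manufactures the convergence obstacle that your proposal then fails to overcome; your sketch of ``propagating a zero downstream threshold'' once a gainy cycle is reached is also not right as stated, since the correct downstream value is governed by the cost of reaching and surviving the cycle itself, not zero.
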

\begin{proof}
In order to find the threshold value $T$ in $N$ such that at least one path from source $s$ to destination $t$ is feasible (\emph{i.e.},
$t$ is reachable from $s$ for seed flow $f_1 > T$), one can use a variation of well-known shortest path algorithms (such as Bellman-Ford algorithm) in the reversed graph of $N$. The reversed graph is constructed from the underlying graph of $N$ by reversing the direction of every edge $e$ and assigning $-g(e)$ as the weight of the reversed edge. Hence, the reachability threshold problem reduces to a modified variation
of the shortest path problem from $t$ to $s$. This problem is a modified variation of the traditional shortest path problem with negative costs, in the sense that the distance of no two vertices can be negative.
Hence, in the modified variation of the Bellman-Ford algorithm in the process of updating the distance matrix for every vertex $v$ from $s$,
the distance of $v$ from $s$ is set to $\max \Set{0,d}$, where $d$ is the newly updated distance of $v$ from $s$.
Note that due to flow conservation constraints at every vertex, there is no feasible flow with positive gain cycles involved in it. Also in this modified variation of Bellman-Ford algorithm the distance of no two vertices can be negative.
Assuming that there is no positive gain cycle in $N$, means that there is no negative cycle in the reversed graph, which in turn implies that the modified variation of Bellman-Ford algorithm always returns the threshold, if $t$ is reachable from $s$. The correctness proof of this approach is straightforward and similar to the proof of the correctness of the standard Bellman-Ford algorithm for the shortest path problem.
The time complexity of this algorithm is the same as the standard Bellman-Ford algorithm, which has $\bigO{|V|\times|E|}$ worst case time bound.
\end{proof}
The accumulated cost of a flow along the path $\Pi = (e_1,\ldots, e_k)$ (feasible or not), is the summation of the cost of the flows entering every edge times the value of cost function assigned to that edge, namely:
\begin{align*}
\sum_{1 \le i \le k} c(e_i) \gamma(\Pi,f_1, i).
\end{align*}
In~\cite{brandenburg2011shortest} regarding the shortest path problem, the authors simplify the problem by assuming that
the seed value $f_1 = 1$.
On the other hand, given the source and destination vertices $s$ and $t$ in flow network $N$, it may be the case that for every path $\Pi$ from $s$ to $t$, no feasible flow along $\Pi$ with the seed value $f_1 = 1$ exists. Accordingly the definition of shortest path problem in~\cite{brandenburg2011shortest} can be generalized as in following.
\begin{definition} {Shortest path in additive flow networks}
\label{def:shortestPath}
The \emph{shortest path} problem in additive flow network $N$ for a given
pair of source and sink vertices $\Set{s,t}$, is the problem of finding a min-cost feasible flow along some path $\Pi$ from $s$ to $t$, when the
seed flow $f_1 = \min\Set{1, T}$. Where $T$ is the reachability threshold for the source/destination pair $\Set{s,t}$.
\end{definition}
In contrast to the shortest path problem in traditional flow networks, this problem is hard in the case of additive flow networks.
From Theorem 1 in~\cite{brandenburg2011shortest} it can be inferred that when the underlying graph of the flow network is planar, this problem is weakly NP-hard. In other words,
the problem may be polynomially solvable if the cost and capacity values assigned to the edges are bounded from above by some polynomial function in the
size of the graph. In the same paper it is shown that this result holds if the cost and gain values are all \emph{nonnegative integers}, even for the case where the underlying graph is not necessary planar.

In this section we show that the shortest path problem is NP-hard \emph{in the strong sense} when the underlying graph of the additive flow network is planar (\emph{planar additive flow network} for short).
We show this result using a polynomial reduction from a problem called Path Avoiding Forbidden Transitions (PAFT for short)~\cite{kante2015finding}.
PAFT is a special case of the problem of finding a path from source vertex $s$ to destination vertex $t$ while avoiding a set of forbidden paths
(initially introduced in~\cite{villeneuve2005shortest}).
Before presenting the main result of this section (as stated in Theorem~\ref{thr:shortestPath}), we briefly define PAFT and some results on this problem (for more details, the reader is referred to~\cite{kante2015finding}).

Given undirected multi-graph $G=(V, E)$, a \emph{transition} in $G$ is an unordered set of two distinct edges of $E$ which are incident to the same vertex of $V$.
If $\mathcal{T}$ denotes the set of all possible transitions in the graph $G$, the set $\mathcal{F}$ of \emph{forbidden transitions} is a subset of $\mathcal{T}$.
Then $\mathcal{A} = \mathcal{T} - \mathcal{F}$ denotes the set of \emph{allowed transitions}.
A simple path $\Pi=(e_1, e_2,\ldots,e_k)$, where $ e_1 = \Set{s,v_1}$ and $e_k = \Set{v_{k-1}, t}$, is $\mathcal{F}$-\emph{valid} if for every $1 \le i < k$,
$\Set{e_i,e_{i+1}} \notin \mathcal{F}$; namely, no transition in $\Pi$ is forbidden. A vertex $v$ is \emph{involved} in a forbidden transition
$\Set{e, e^{\prime}}$
if two edges $e$ and $e^{\prime}$ share $v$ and $\Set{e, e^{\prime}} \in \mathcal{F}$.
\begin{definition}{PAFT}
\label{def:PAFT}
Consider a multi-graph $G = (V, E)$, a set of forbidden transitions $\mathcal{F}$ and designated source and destination vertices $s, t \in V$.
PAFT is the problem of find an $\mathcal{F}$-valid path from $s$ to $t$, if exists.
\end{definition}
\begin{lemma}
\label{lemma:PAFT}
PAFT is NP-complete for planar graphs where the degree of every vertex $v \in V -\Set{s,t}$ is $3$ or $4$, where $s$ and $t$ are the source and destination vertices, respectively.
\end{lemma}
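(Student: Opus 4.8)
The plan is to prove the two halves of the NP-completeness claim separately. Membership in NP is immediate: a certificate is a sequence of edges $\Pi=(e_1,\ldots,e_k)$, and a verifier checks in time linear in $k\le |E|$ that $\Pi$ is a simple path from $s$ to $t$ and that each consecutive pair $\Set{e_i,e_{i+1}}$ avoids $\mathcal{F}$ (equivalently, lies in $\mathcal{A}=\mathcal{T}-\mathcal{F}$), exactly as required by Definition~\ref{def:PAFT}. This direction needs no gadgetry.

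For NP-hardness I would reduce from $3$-SAT. Given a formula $\phi$ with variables $x_1,\ldots,x_n$ and clauses $C_1,\ldots,C_m$, the goal is to build a planar host graph $G_\phi$ — conveniently laid out on a grid, so that internal vertices carry degree $3$ or $4$ automatically — together with a forbidden set $\mathcal{F}$, such that $\phi$ is satisfiable if and only if $G_\phi$ admits an $\mathcal{F}$-valid $s$-$t$ path. The central primitive is a \emph{wire}: a chain of grid vertices along which forbidden transitions outlaw every turn off the chain, so that once the path enters the wire it is forced to run to the wire's end. From wires I would assemble three components: (i) a \emph{variable gadget}, a small diamond offering two internally disjoint exit wires, so the path must commit to exactly one of ``$x_i$ true'' or ``$x_i$ false''; (ii) a \emph{clause gadget}, reached by the three literal-wires of $C_j$ and wired so the path can traverse it precisely when at least one incident literal-wire carries the ``true'' route; and (iii) a \emph{crossover gadget}, built from degree-$4$ vertices whose permitted transitions are only the two straight-through passages, so two logically independent wires can cross in the plane without interacting.

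I would then chain the variable gadgets in series to fix a truth assignment, route the literal-wires into the clause gadgets to test satisfaction, and insert crossover gadgets wherever the literal-to-clause incidence forces two wires to cross. The forbidden transitions are the sole enforcement mechanism, while the grid layout keeps $G_\phi$ planar with bounded degree. Correctness then splits into two implications: from a satisfying assignment one threads an explicit $\mathcal{F}$-valid path, choosing in each variable gadget the exit wire matching the assignment and entering each clause gadget through a satisfied literal; conversely, any $\mathcal{F}$-valid path, being pinned to the wires, reads off a single truth value for each $x_i$, with consistency across the multiple occurrences of $x_i$ guaranteed because one simple path can leave a variable gadget by only one of its two exit wires.

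The main obstacle is gadget engineering under three simultaneous constraints: the path must stay \emph{simple} (no vertex revisited), the host must remain \emph{planar} with internal degree in $\Set{3,4}$, and the only available enforcement is \emph{local} (pairs of adjacent edges), whereas $3$-SAT consistency is \emph{global}. The crossover gadget is the sharpest difficulty, since a simple path cannot reuse a vertex, a logical crossing cannot be realized by a single shared vertex; it must instead be simulated by a constant-size subgraph whose forbidden transitions force the two wires to pass through independently, and proving this subgraph correct in both directions is where I expect the bulk of the work. Finishing the argument then amounts to verifying that the assembled $G_\phi$ has size polynomial in $n+m$ and that, after all clause and crossover gadgets and any degree-normalizing subdivisions are inserted, every internal vertex indeed has degree $3$ or $4$ and planarity is preserved.
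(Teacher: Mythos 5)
Your NP-membership argument is fine, but the hardness half of your proposal is a blueprint rather than a proof, and it defers exactly the steps that carry all the difficulty. You never construct the variable, clause, or crossover gadgets; you only list the properties you would want them to have. The crossover is the crux: because the certificate is a single \emph{simple} path, the two ``wires'' that cross cannot share a vertex, so the crossing must be simulated by a constant-size subgraph through which two segments of the \emph{same} path pass at different times on disjoint vertex sets, with forbidden transitions preventing any interaction --- you name this as the main obstacle and then leave it entirely unresolved, so neither direction of your correctness claim can be checked. In effect you are proposing to re-derive from scratch the theorem of Kant\'e et al.~\cite{kante2015finding} (NP-completeness of PAFT on planar graphs of maximum degree $4$), which is a full paper's worth of gadget engineering. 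There is also a concrete error at the end: you propose ``degree-normalizing subdivisions,'' but subdividing an edge \emph{creates} a vertex of degree $2$, which is precisely what the statement forbids; meeting the degree-$\Set{3,4}$ condition requires eliminating low-degree vertices, not introducing them. For the same reason, your claim that a grid layout gives internal degree $3$ or $4$ ``automatically'' is unjustified --- grid graphs routinely contain degree-$1$ and degree-$2$ vertices.

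That elimination step is, in fact, the entirety of the paper's proof, which is far shorter than your plan. The paper takes the result of~\cite{kante2015finding} as given --- PAFT is NP-complete on planar graphs with maximum degree $4$ --- and reduces that problem to the degree-$\Set{3,4}$ case by three local operations applied until no vertex of degree $1$ or $2$ remains: delete each degree-$1$ vertex with its incident edge and the forbidden transitions involving it; delete a degree-$2$ vertex together with both incident edges when its unique transition is forbidden (no valid path can pass through it); and smooth out a degree-$2$ vertex whose transition is allowed, replacing its edges $\Set{w,v}$ and $\Set{v,u}$ by a single edge $\Set{w,u}$ and transferring forbidden transitions to the new edge. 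Each operation preserves planarity and the existence of an $\mathcal{F}$-valid $s$--$t$ path, so NP-hardness transfers immediately, and membership in NP is as you observed. If you want your from-scratch route to stand, you must actually exhibit the gadgets and prove both implications; starting from \emph{planar} 3-SAT rather than general 3-SAT would at least spare you the crossover gadget, but the simpler citation-plus-normalization argument is the one that matches the claimed statement with minimal work.
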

\begin{proof}
In~\cite{kante2015finding}, the authors show that PAFT is NP-complete in planar graphs where the degree of every vertex is at most $4$.
Consider graph $G = (V, E)$ with maximum vertex degree $4$ and source and destination vertices $s$ and $t$, and let $\mathcal{F}$  be the set of forbidden transitions.
Graph $G^{\prime} = (V^{\prime}, E^{\prime})$ and the set of forbidden transitions $\mathcal{F}^{\prime}$ are constructed as explained in what follows.

\noindent
Initially $V^{\prime} = V$, $E^{\prime} = E$ and $\mathcal{T}^{\prime} = \mathcal{T}$. Until there is no vertex of degree $1$ or $2$, for every vertex $v \in V^{\prime} - \Set{s,t}$:
\begin{enumerate}
  \item \textbf{If $deg(v) = 1$:} (i) $V^{\prime} = V^{\prime} -{v}$, (ii) $E^{\prime} = E^{\prime} - e$, where $e$ is the edge incident to $v$, and (iii) $\mathcal{T}^{\prime} = \mathcal{T}^{\prime} - \tau$, for every forbidden transition $\tau$ that $v$ is involved in.

  \item \textbf{If $deg(v) = 2$ and $\tau = \Set{e_1, e_2} \in \mathcal{F}$, where $e_1$ and $e_2$ share $v$:} (i) $V^{\prime} = V^{\prime} -{v}$, (ii) $E^{\prime} = E^{\prime} - \Set{e_1, e_2}$, and (iii) $\mathcal{T}^{\prime} = \mathcal{T}^{\prime} - \tau$

  \item\textbf{If $deg(v) = 2$ and $\tau = \Set{e_1, e_2} \notin \mathcal{F}$, where $e_1$ and $e_2$ share $v$:} (i) $v$ is smoothed out by removing $v$ and the two incident edge $e_1 =\Set{w,v}$ and $e_2 = \Set{v, u}$ and introducing a new edge $e^{\prime} = \Set{w, u}$.
      (ii) For $i \in \Set{1,2}$ if $\Set{e_i, e} \in \mathcal{F}$ for some $e \in E$; $F^{\prime} = F^{\prime}\uplus \Set{\Set{e^{\prime}, e}} - \Set{e_i, e}$.

\end{enumerate}
It is straightforward and left to reader to verify that there is $\mathcal{F}$-valid path from $s$ to $t$ in $G$ iff there is a $\mathcal{F^{\prime}}$-valid path from $s$ to $t$ in $G^{\prime}$.
Accordingly, the NP-hardness of PAFT for planar graphs with maximum degree $4$ results in the NP-hardness of PAFT for the class of planar graphs where the degree of every vertex is $3$ or $4$. In~\cite{kante2015finding}, using a similar approach, this result is extended to grid graphs.
\end{proof}

This lemma helps us to draw the main result of this section (as stated at the end of this section in Theorem~\ref{thr:shortestPath}).
Before that, in an intermediate step, Procedure~\ref{proc:PAFT-to-Shortest} represents an approach to transforming an instance of PAFT into an additive flow network. Without loss of generality, we assume that the source and the sink vertices  are not involved in any
forbidden transition\footnote{If the source vertex (or sink vertex) is involved in any forbidden transition, a new source vertex (or sink vertex) is introduced and is connected to the old one.}.
\begin{procedure}{PAFT's instance into additive flow network}
\label{proc:PAFT-to-Shortest}
Consider a planar undirected multi-graph\footnote{Assume that the planar embedding is given.} $G = (V, E)$, a set of forbidden transitions $\mathcal{F}$, and source and destination vertices $s$ and $t$
as an instance of PAFT, where for every vertex $v \in V$, $deg(v) \in \Set{3,4}$.
We transform such instance of PAFT into an additive flow network $N$ in several steps:
\begin{enumerate}
  \item Every undirected edge $e = \Set{v,u}$ is replaced by a pair of parallel incoming/outgoing directed edges $e=(v,u)$ and $e^{\prime}=(u,v)$.

  \item Two new vertices $s^{\prime}$ and $t^{\prime}$ are introduced. $s^{\prime}$  is connected to $s$ via edge $e_s = (s^{\prime}, s)$ and $t$ is connected
  to $t^{\prime}$ via $e_t = (t, t^{\prime})$. $e_t$ has $0$ gain while $e_s$ has gain value $B$ assigned to it for some $B > 1$.
  To both edges $e_s$ and $e_t$ is assigned cost value $0$.

  \item For every vertex $v$, not involved in any forbidden transition, every outgoing edge $(v,v^{\prime})$ is subdivided into two edges
  $(v, w)$ and $(w, v^{\prime})$ by introducing a new vertex $w$. Two edges $(v, w)$ and $(w, v^{\prime})$
  respectively have gain value $-B$ and $+B$ and cost $c = + B$ and $c = - B$. Figure~\ref{fig:non-forbidden} represents the replacement gadget for a vertex $v$ with degree $4$.

  The gain value $-B$ assigned to every outgoing edge $(v, w)$ guarantees that if $B+1$ units of flow enter the gadget of $v$,
  the exiting flow reaches the gadget of at most one of the neighbors of $v$ in $G$. The gain value $+B$ assigned to the edge $(w, v^{\prime})$
  (connected to outgoing edge $(v, w)$) compensates for the lost flow that enters $(v, w)$, if some flow reach $w$. Hence, $B+1$ units of flow entering the gadget of $v$ reach the gadget of exactly one of the neighbors of $v$ in $G$ \emph{with no loss}, if only one of the outgoing edges is chosen.

  Moreover based on the same reasoning, the cost values in this gadget assure us that the $B+1$ units of flow entering this gadget reaches the gadget of the neighboring vertex \emph{with no cost}, if only one of the outgoing edges is chosen.

  In summary, if $B+1$ units of flow enter such gadget of a vertex $v$, in order to have some flow reaching the neighboring gadget, one of the
  outgoing edges must have $B+1 - x$ units entering flow for $0 \le x < 1$. In this case $B+1-x$ units reach the neighboring gadget and the $x$ units of flow is consumed with total cost $xB$.
\begin{figure}[H]
    \begin{center}
        \includegraphics[scale=0.5]{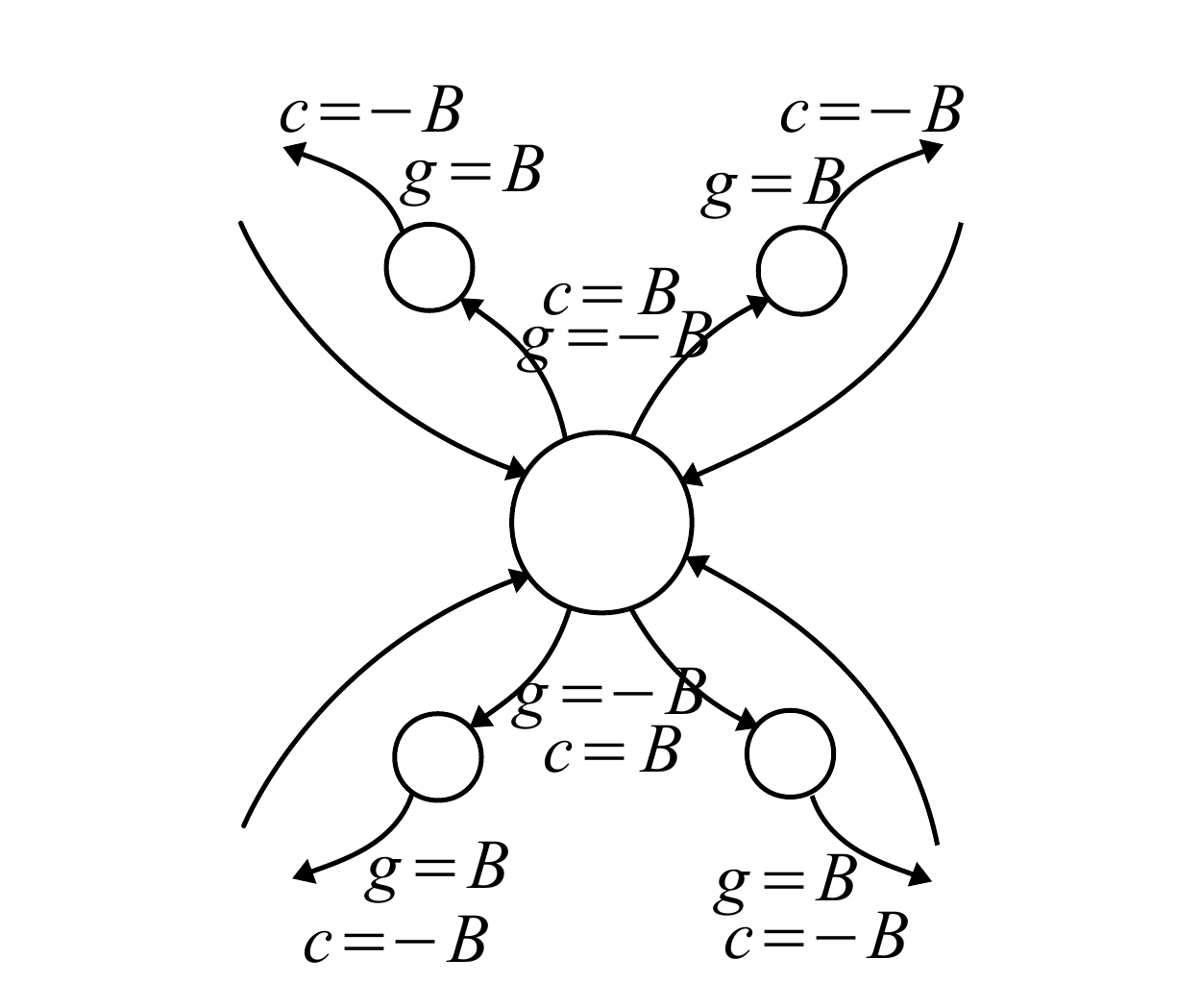}
    \end{center}
    \vspace{-0.3in}
    \caption{The gadget that replaces a vertex of degree $4$ which is not involved in any forbidden transition.}
    \label{fig:non-forbidden}
    \vspace{-0.15in}
\end{figure}
  \item Consider vertex $v$ involved in some forbidden transitions with degree $d = deg(v) \le 4$ incident to $d$ pairs of incoming/outgoing parallel edges $\Set{e_1, e_1^{\prime}} ,\ldots,\Set{e_{d}, e_{d}^{\prime}}$. Originally in graph $G$ vertex $v$ is incident to $e_1, \ldots, e_{d}$. The planar embedding of $v$ in $G$ is represented in Figure~\ref{fig:forbidden} on the left, when $d = 4$.
      \begin{enumerate}
        \item \textbf{If $d = 3$}:
        vertex $v$ is replaced by $3$ new vertices $v_1,v_2, v_{3}$ where $v_i$ is incident to the pair $\Set{e_i, e_i^{\prime}}$ for $1 \le i \le 3$, based on the planar embedding of edges $e_1, e_2, e_{3}$ in $G$.
        For $1 \le i, j \le 3$, two vertices $v_i$ and $v_j$ are directly connected with a pair of parallel incoming/outgoing edges, if
        $\Set{e_i,e_j} \notin \mathcal{F}$. The cost of every edge is $0$ and the gain factor for every introduced edge $(v_i, v_j)$ is $-B$ for $1 \le i, j \le 3$. Finally, every outgoing edge connecting $v_i$  (for $1 \le i \le 3$) to another gadget (corresponding to one of the neighbors of $v$ in $G$) has cost $0$ and gain factor $+B$.
        \item \textbf{If $d = 4$ and $\Set{\Set{e_1,e_4}, \Set{e_2,e_3}} \nsubseteq \mathcal{A} $}: A gadget of four vertices replaces $v$, following the same procedure as explained in the previous case.
        Figure~\ref{fig:forbidden} depicts an example of transforming a vertex of degree $4$ involved in some forbidden transitions.
\begin{figure}[h]
    \begin{center}
        \includegraphics[scale=0.5]{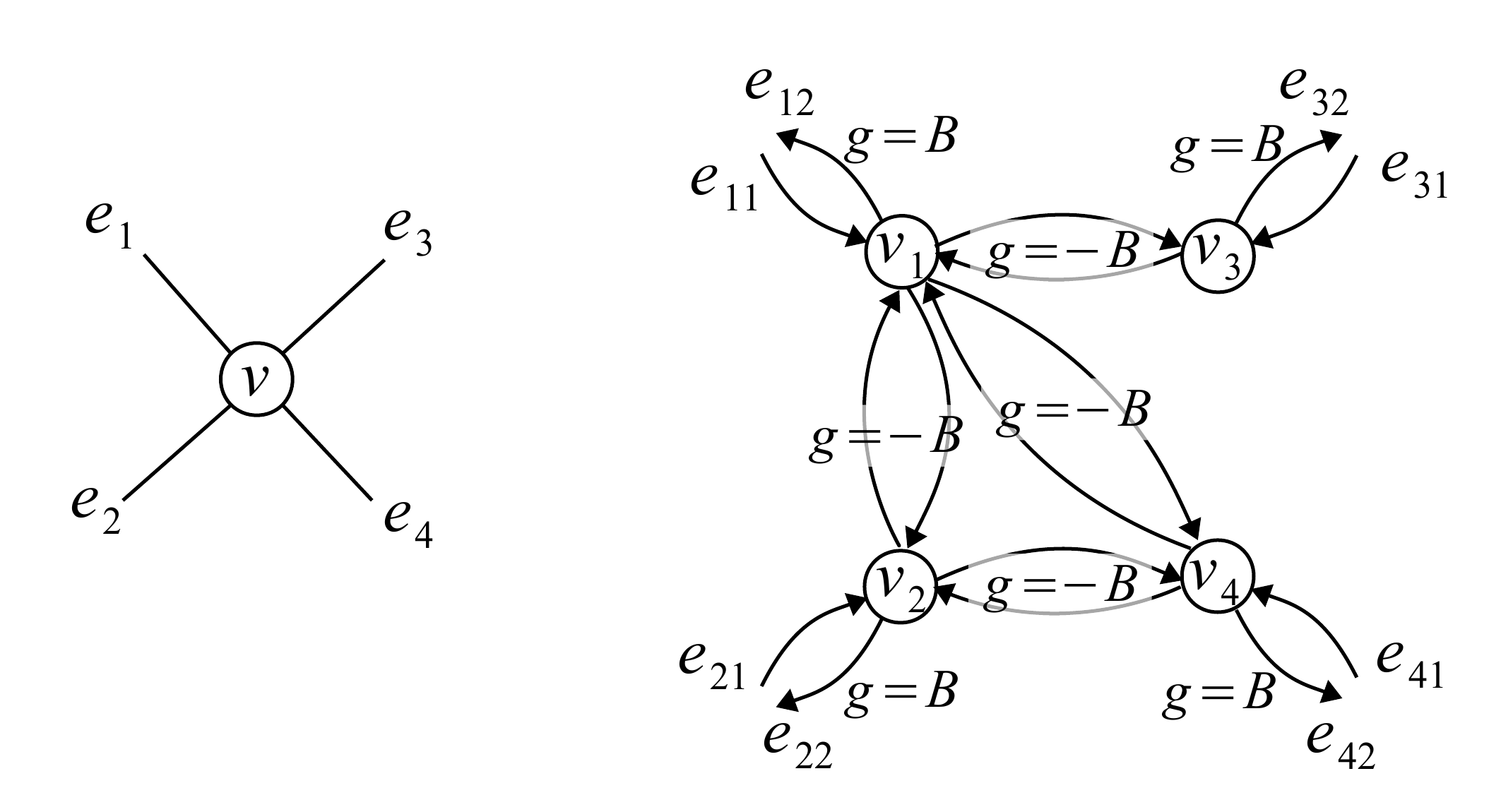}
    \end{center}
    \vspace{-0.3in}
    \caption{On the left is the planar embedding of a vertex $v \in V$ incident to $4$ edges where $\Set{e_2, e_3}, \Set{e_3, e_4} \in \mathcal{F}$.
    The right image shows the gadget replacing vertex $v$. Every edge connecting two vertices $v_i$ and $v_j$ for $1 \le i, j \le 4$
    has cost value $0$ and gain value $-B$ assigned to it.
    Vertices $v_2$ and $v_3$ are not directly connected since their corresponding
    edges $e_2$ and $e_3$ are involved in a forbidden transition. Hence, a flow from $v_2$ to $v_3$ (and vice versa) loses $-2B$ units.
    Same situation holds for $v_3$ and $v_4$.
    }
    \label{fig:forbidden}
    \vspace{-0.2in}
\end{figure}
        \item \textbf{If $d = 4$ and $\Set{\Set{e_1,e_4}, \Set{e_2,e_3}} \subseteq \mathcal{A} $:}
        Using the same approach as in the previous case spoils the planarity of the resulting network $N$.
        To solve this problem, vertex $v$ is replaced by a gadget of $5$ vertices.
        In addition to $4$ vertices $v_1, \ldots, v_4$, where for $1 \le i \le 4$, $v_i$ is connected to the pair of incoming/outgoing edges $\Set{e_i, e_i^{\prime}}$,
        a central vertex $w$ is introduced as well.
        As depicted in Figure~\ref{fig:forbidden-bad}, in the replacing gadget,
        vertices $v_1$ and $v_4$ (also vertices $v_2$ and $v_3$) are connected via the central vertex $w$.
        Any other two vertices $v_i$ and $v_j$
        for $1 \le i, j \le 4$ are directly connected via a pair of parallel incoming/outgoing edges, if
        $\Set{e_i,e_j} \notin \mathcal{F}$ (with gain and cost values $-B$ and $0$ respectively).
        The cost and gain factors for every edge can be found by that edge and the missing values are the default value $0$.

        Assume $B+1$ units of flow reach any of the four vertices $v_1, \ldots, v_4$. The gain and cost values assigned to the edges incident to $w$ guarantee that a flow can go through the central vertex $w$ \emph{with no cost} and reach the destination \emph{with $-B$ units loss}, only if it is from $v_1$ to $v_4$ and vice versa or if it is from $v_2$ to $v_3$ and vice versa. Any other flow, with the initial value $B+1$ units, that uses $w$
        is either costly (costs $2B(B+1)$) or gets fully consumed (\emph{i.e.}, does not reach the destination).
      \end{enumerate}

  \item Every edge has capacity $B+1$.
\end{enumerate}
\begin{figure}[h]
    \begin{center}
        \includegraphics[scale=0.4]{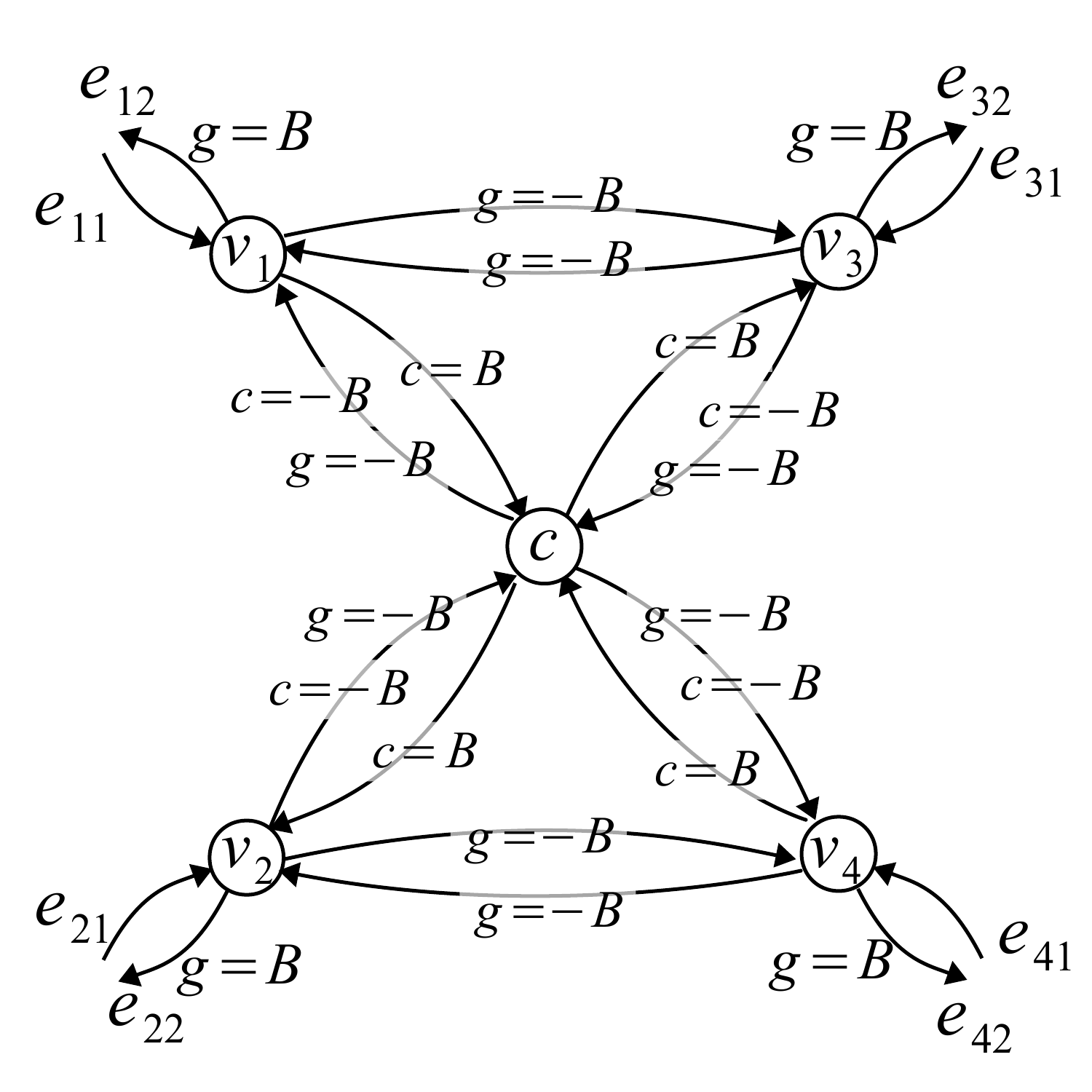}
    \end{center}
    \vspace{-0.3in}
    \caption{
    The gadget replacing vertex $v$ where $deg(v) = 4$ and $\Set{e_1,e_4},\Set{e_2,e_3} \in \mathcal{A}$. Edges $e_1, \ldots, e_4$ are the edges connected to $v$ as represented in a planar embedding of $G$ in Figure~\ref{fig:forbidden} (on the left). This gadget replaces $v$ where
    $\Set{e_1, e_2}, \Set{e_3, e_4} \in \mathcal{F}$.
    }
    \label{fig:forbidden-bad}
    \vspace{-0.2in}
\end{figure}
Note that if the undirected multi-graph $G$ is planar, then so is the underlying graph of the flow network $N$
as a result of the preceding transformation. Also since the capacity of every edge is $B+1$, it is guaranteed
that maximum amount of flow that enters a gadget is $B+1$ units and no more than $B+1$ units of flow departs any gadget.
\end{procedure}
\begin{example}Figure~\ref{fig:undirected-forbidden} denotes a graph $G$ and its designated source and destination vertices.
The set of forbidden transitions is $\mathcal{F}=\Set{\Set{e_2,e_3}, \Set{e_3, e_4}}$. Figure~\ref{fig:undirected-forbidden-after} shows
the additive flow network $N$ constructed based on $G$ and the set of forbidden transitions. The set of four vertices shown in the dashed
circle represents the gadget replacing vertex $v$ which is the only vertex involved in the forbidden transitions. The missing gains of every edge
in this gadget is $-B$.
\begin{figure}[H]
\hspace{35 pt}
        \begin{subfigure}[b]{0.3\textwidth}
                    \includegraphics[scale=.5]
                    {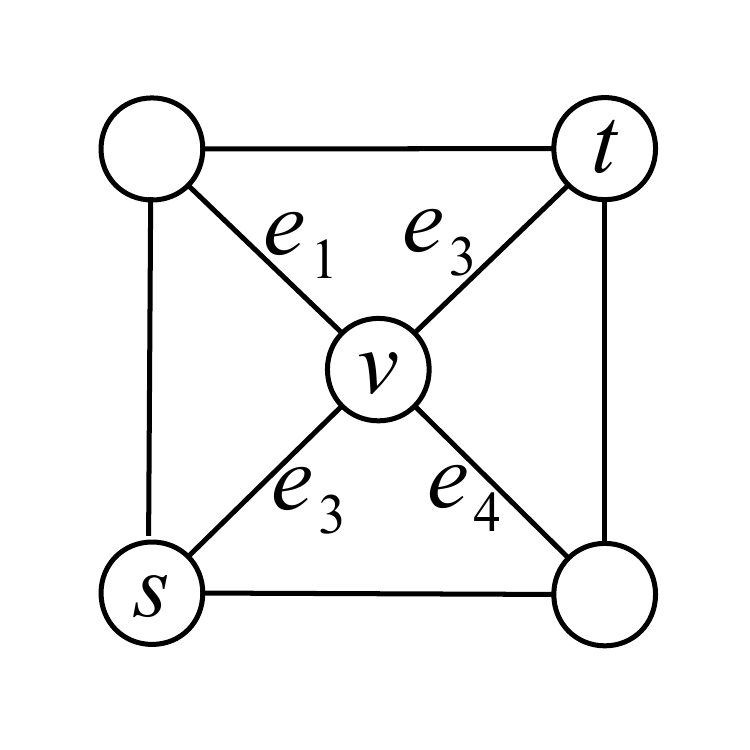}
                    \vspace{-0.2in}
                \caption{Graph $G$, input for PAFT.}
                \label{fig:undirected-forbidden}
        \end{subfigure}
        \qquad
        \begin{subfigure}[b]{0.3\textwidth}
                    \includegraphics[scale=.5]
                    {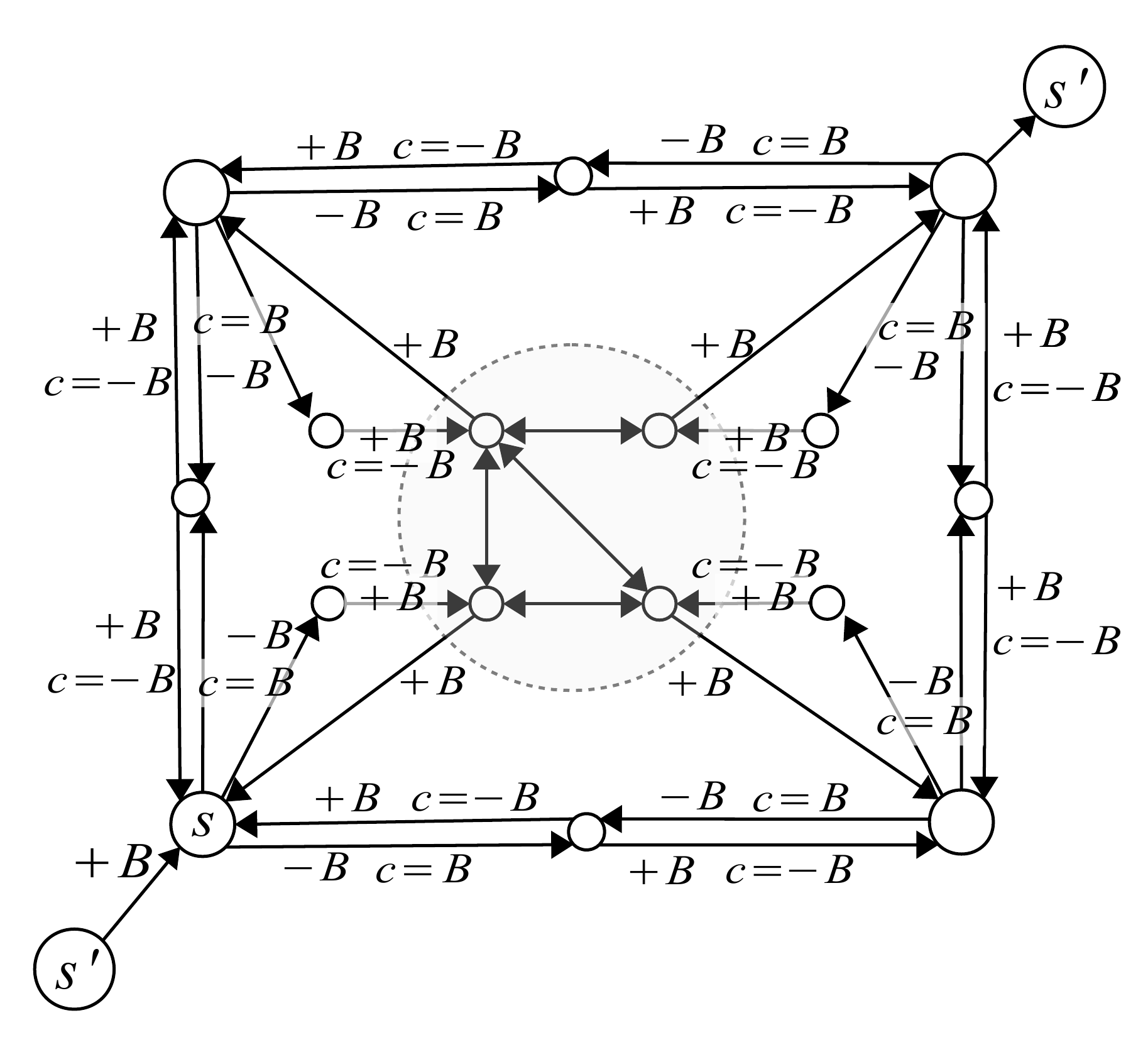}
                    \vspace{-0.2in}
                \caption{Additive flow network $N$.}
                \label{fig:undirected-forbidden-after}
        \end{subfigure}
        \caption{Additive flow network $N$ in~\ref{fig:undirected-forbidden-after}, constructed based on the graph $G$ in~\ref{fig:undirected-forbidden} and the set of forbidden transitions $\mathcal{F}=\Set{\Set{e_2,e_3}, \Set{e_3, e_4}}$.
        The set of four vertices shown in the dashed circle represent the gadget replacing middle vertex $v$.
        The missing gain values for every edge in this gadget are $-B$.
        The gain of every other edge is shown by $+B$ or $-B$ by that edge. The default values for missing cost and capacity functions are respectively $0$ and $B+1$.
        }
        \label{fig:undirected-forbidden-before-after}
        \vspace{-0.15in}
\end{figure}
\end{example}
\begin{lemma}
\label{lemma:PAFT-to-network}
Consider an undirected multi-graph $G$, a set of forbidden transitions $\mathcal{F}$, and source and destination
vertices $s$ and $t$ as an instance of PAFT, where the degree of every vertex $v \in V$ is $3$ or $4$. Let $N$ be the
additive flow network constructed from $G$ and the forbidden transitions set $\mathcal{F}$ using Procedure~\ref{proc:PAFT-to-Shortest}.
There is a (simple) $\mathcal{F}$-valid path from $s$ to $t$ iff  there exists a feasible flow in $N$ along a (simple) path from $s^{\prime}$ to $t^{\prime}$ with no cost, when seed flow $f_1 = 1$.
\end{lemma}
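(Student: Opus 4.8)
The plan is to prove the two directions of the biconditional separately, in each case leaning on the gadget behaviour established in Procedure~\ref{proc:PAFT-to-Shortest}. Throughout I would carry one invariant: since the seed flow is $f_1 = 1$ and $e_s = (s',s)$ has gain $B$, exactly $B+1$ units enter the gadget of $s$, and since every edge of $N$ has capacity $B+1$, the flow entering any edge is at most $B+1$. This single invariant drives everything. Every internal gadget edge carries gain $-B$, so such an edge can deliver positive flow to its head only if more than $B$ units enter it; as at most $B+1$ units ever reach a gadget, two internal edges cannot both survive (that would need two disjoint parts each exceeding $B$). This is exactly the "reaches the gadget of exactly one neighbor" behaviour asserted in the Procedure, and it localizes the analysis to one surviving route per gadget.

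\emph{Forward direction.} Given an $\mathcal{F}$-valid simple path $\Pi=(a_1,\ldots,a_k)$ from $s$ to $t$ in $G$, in which no consecutive pair $a_i,a_{i+1}$ forms a forbidden transition, I would build the witnessing flow by pushing $1$ unit on $e_s$ and routing the resulting $B+1$ units through the gadgets of the vertices visited by $\Pi$, in order. At a vertex not involved in any forbidden transition I select the single outgoing edge matching $a_{i+1}$; by the subdivision gadget of Figure~\ref{fig:non-forbidden} the $B+1$ units pass losslessly and, since nothing is consumed, at zero cost. At a vertex involved in forbidden transitions I route from the copy $v_i$ carrying the incoming edge $e_i$ to the copy $v_j$ carrying $e_j = a_{i+1}$; because $\{e_i,e_j\}\notin\mathcal{F}$ there is a direct allowed connection (or, in the degree-$4$ planarity-preserving case of Figure~\ref{fig:forbidden-bad}, the legitimate route through the central vertex $w$), along which the flow passes losslessly at zero cost. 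Finally the $B+1$ units reach $t$, cross $e_t$ (gain $0$, cost $0$) into $t'$ with positive value, and the total accumulated cost is $0$, so the flow is feasible and costless.

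\emph{Backward direction.} Conversely, suppose $f$ is a feasible flow with seed $f_1=1$ supported on a simple path $\Pi'$ from $s'$ to $t'$ whose accumulated cost is $0$. I would first show that the cost contributed by each gadget is always nonnegative and vanishes only under \emph{clean} traversal: a single outgoing edge with no loss in a non-forbidden gadget, a direct allowed-transition edge (or the sanctioned central-vertex route) in a forbidden gadget. This rests on the arithmetic compiled into the gadgets: losing $x$ units in a subdivision gadget costs $xB$; crossing a missing (forbidden) connection forces a detour through two $-B$ edges, which consumes the flow entirely because $B>1$; and the illegitimate central-vertex routes either consume the flow or cost $2B(B+1)>0$. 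Since $f$ is carried by a single simple path, each gadget is entered at most once with at most $B+1$ units, so the total cost is the sum of these nonnegative per-gadget terms; a total of $0$ forces every gadget to be crossed cleanly and losslessly. Reading off the original edges $e_i$ at which $\Pi'$ enters and leaves successive gadgets then produces a walk in $G$ from $s$ to $t$ whose consecutive edges form allowed transitions, and the simplicity of $\Pi'$ lets me extract from it a simple $\mathcal{F}$-valid path.

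The hard part will be the backward direction, and specifically ruling out any global cancellation of positive and negative edge costs that might let a flow reach total cost $0$ while secretly using a forbidden transition or discarding flow. The crux is to prove that each gadget's cost is \emph{independently} nonnegative and vanishes exactly on clean traversal, so that negative-cost edges can only offset the positive-cost edge inside the same gadget and never across gadgets. Making this localization rigorous, together with the bookkeeping that the $B+1$-units-versus-capacity-$B+1$ invariant admits at most one surviving outgoing edge per gadget, is where the real work lies.
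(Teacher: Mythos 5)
Your proposal takes essentially the same route as the paper's own proof: the forward direction routes the seed unit through the gadgets along the $\mathcal{F}$-valid path exactly as the paper does, and the backward direction performs the same per-gadget case analysis (a non-forbidden gadget losing $x$ units costs $xB$; using a forbidden transition either consumes the flow via two $-B$ edges or incurs cost $2B(B+1)$), concluding that a zero-cost feasible flow forces clean traversal, from which the $\mathcal{F}$-valid path is read off. The only difference is one of emphasis: you explicitly flag the need to rule out cross-gadget cancellation of the positive- and negative-cost edges, a point the paper's proof handles only implicitly through its gadget-by-gadget accounting.
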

\begin{proof}
\textbf{($\Rightarrow$)} Assume there is a $\mathcal{F}$-valid (simple) path $\Pi = (s, v_1, \ldots, v_k, t)$ in $G$. Based on $\Pi$ we suggest a flow in $N$ where a unit of flow departing $s^{\prime}$ reaches the gadget of $v_1 = s$ while gaining $B$ units of flow.
The accumulated cost so far is $0$.
Based on the construction of every gadget, following the path $\Pi$, the $B+1$ units of flow can go through the corresponding gadget of every vertex $v_i$ and reach the gadget of $t$ with no loss and no cost.
Therefore, there is a feasible flow in $N$ from $s^{\prime}$ to $t^{\prime}$ with $0$ cost.

\noindent
\textbf{($\Leftarrow$)}
Let $f$ be a feasible flow in $N$ along a (simple) path $\Pi = (s^{\prime},s, \ldots,t, t^{\prime})$ with no cost, where seed flow $f_1 = 1$.
Based on the construction of $N$ from $G$, in a coarser view, a feasible flow goes from one gadget to another gadget.
Hence, path $\Pi$ can be viewed as $\Pi = (s^{\prime},\underline{s}, \underline{v_1}, \ldots, \underline{v_k}, \underline{t},t^{\prime})$,
where $\underline{v_i}$ represent the gadget corresponding to vertex $v_i$ that some of its edges are used in path $\Pi$.

The unit flow departs $s^{\prime}$ and $B+1$ units reach the gadget $\underline{s}$ with cost $0$.
When $B+1$ units reach the gadget $\underline{v}$ of $v$:
\begin{itemize}
  \item \textbf{If $v$ is not involved in any forbidden transition:} As explained in the third step of Procedure~\ref{proc:PAFT-to-Shortest},
  $B+1 - x$ units reach the gadget of one of the neighbors of $v$ with cost $xB$ for $0 \le x < 1$.
  \item \textbf{If $v$ is involved in some forbidden transitions:}
  \begin{itemize}
    \item If $v$ is an instance of $4$-(a) or $4$-(b) in Procedure~\ref{proc:PAFT-to-Shortest}, flow can reach the gadget of exactly one of neighbors of $v$ only if no forbidden transition
      is used.
    \item If $v$ is an instance of $4$-(c) in Procedure~\ref{proc:PAFT-to-Shortest}, as explained in this case, the flow
    is either fully consumed or suffers cost $2B(B+1)$, if any forbidden transition is used.
    The entering flow to this gadget reaches the neighboring gadget with no loss and no cost, only if no forbidden transition is used by the flow.
    \item For every gadget involved in some forbidden transitions, it is always the case that the entering $B+1$ units of flow
    either is fully consumed or suffers no loss upon reaching the neighboring gadget.
  \end{itemize}
\end{itemize}
Every feasible flow $f$ originated from $s^{\prime}$ with seed flow $f_1 = 1$ as reaches $t^{\prime}$
has gained $B-x$ units, with accumulated cost $xB$ for $0 \le x <1$, if no forbidden transition is used.
On the other hand, if any forbidden transition is used, a feasible flow along a path from $s^{\prime}$ to $t^{\prime}$ costs at least $2B(B+1)$, as explained in the sub-cases of case $4$ in Procedure~\ref{proc:PAFT-to-Shortest}.
Accordingly, a feasible flow with $f_1 = 1$ along some path from $s^{\prime}$ to $t^{\prime}$ has minimum cost
$0$, only if no forbidden transition is used and $x = 0$ (\emph{i.e.}, $B+1$ units reach $t^{\prime}$ with no loss).
\end{proof}
The following theorem concludes this section.
\begin{theorem}
\label{thr:shortestPath}
The simple shortest path problem is NP-hard in the strong sense for additive flow networks where the underlying graph is planar.
\end{theorem}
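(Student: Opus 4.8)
The plan is to obtain the result by a polynomial reduction from PAFT, whose restriction to planar graphs in which every vertex other than $s$ and $t$ has degree $3$ or $4$ is NP-complete by Lemma~\ref{lemma:PAFT}. Essentially all of the combinatorial machinery is already in place; what remains is to assemble it and, crucially, to check that the reduction uses only small (polynomially bounded) numbers, which is exactly what upgrades ordinary NP-hardness to NP-hardness \emph{in the strong sense}. First I would pin down the decision version: given a planar additive flow network $N$ with a single source/sink pair $\Set{s,t}$ and seed flow as in Definition~\ref{def:shortestPath}, decide whether there is a feasible flow along a simple $s$--$t$ path of accumulated cost at most $0$. Starting from a PAFT instance $(G,\mathcal{F},s,t)$ with all interior degrees in $\Set{3,4}$, I would apply Procedure~\ref{proc:PAFT-to-Shortest} to build the network $N$ together with its distinguished vertices $s'$ and $t'$.

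Next I would establish correctness and efficiency. Correctness is immediate from Lemma~\ref{lemma:PAFT-to-network}: there is a simple $\mathcal{F}$-valid path from $s$ to $t$ in $G$ if and only if there is a feasible zero-cost flow along a simple $s'$--$t'$ path in $N$ with seed flow $f_1 = 1$. For efficiency and structure I would observe that Procedure~\ref{proc:PAFT-to-Shortest} replaces each vertex by a gadget of at most $5$ vertices and each undirected edge by $\bigO{1}$ directed (possibly subdivided) edges, so $N$ has $\bigO{n+m}$ vertices and edges and is produced in polynomial time; moreover the Procedure is expressly designed in case~4(c) to keep the embedding planar, so the underlying graph of $N$ is planar, as noted at the end of the Procedure.

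The single point that needs care — and the main obstacle — is the adjective \emph{strong}. Since $B$ may be taken to be any fixed constant with $B>1$ (say $B=2$), every gain value $\pm B$, every cost value $\pm B$ or $0$, and every capacity $B+1$ appearing in $N$ is a constant, hence trivially bounded by a polynomial in the size of $N$; a reduction whose output contains only polynomially bounded numbers witnesses NP-hardness in the strong sense. This contrasts with the merely weak hardness inferable from~\cite{brandenburg2011shortest}, whose construction uses exponentially large values. Before concluding I would still verify two small consistency points so that the hypotheses of Lemma~\ref{lemma:PAFT-to-network} are genuinely met: that for the constructed $N$ the seed flow of Definition~\ref{def:shortestPath} equals the value $1$ used in the lemma, and that $0$ is the correct cost bound separating the two cases, i.e.\ an $\mathcal{F}$-valid path yields cost exactly $0$ whereas any $s'$--$t'$ flow forced through a forbidden transition costs at least $2B(B+1)>0$. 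With these checked, the biconditional of Lemma~\ref{lemma:PAFT-to-network}, together with planarity and the constant bound on all numeric data, completes the strong NP-hardness reduction and hence the theorem.
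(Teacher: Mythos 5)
Your proposal is correct and follows essentially the same route as the paper: a polynomial-time reduction from PAFT using Lemma~\ref{lemma:PAFT}, Procedure~\ref{proc:PAFT-to-Shortest}, and the biconditional of Lemma~\ref{lemma:PAFT-to-network}. Your explicit observation that $B$ can be fixed to a small constant (so all gains, costs, and capacities are polynomially bounded, which is what justifies hardness \emph{in the strong sense}) is a detail the paper leaves implicit, and it strengthens rather than changes the argument.
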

\begin{proof}
Proof is immediate based on Lemmas~\ref{lemma:PAFT} and~\ref{lemma:PAFT-to-network} and the fact that the procedure~\ref{proc:PAFT-to-Shortest} can be carried out in polynomial time (with respect to the size of the input graph $G$ and the set of forbidden transitions $\mathcal{F}$).
\end{proof}

\section{Maximum flow problem in additive flow network}
\label{sect:maxFlow}
  In~\cite{brandenburg2011shortest}, the authors study the problem of maximum flow with additive gains and losses for general graphs. They show that
finding maximum in-flow and out-flow are NP-hard tasks for general graphs.
In this section we show the same result for planar additive flow networks.
In order to show this result, we facilitate a special variation of planar satisfiability problem (planar SAT),
which is briefly defined in the following.
\begin{definition}{Strongly planar CNF} Conjunctive normal form (CNF for short) formula $\varphi = (\mathcal{X}, \mathcal{C})$ with the set of variable $\mathcal{X}$ and the set of clauses $\mathcal{C}$ is \emph{strongly planar} if graph $G_{\varphi} = (V, E)$ constructed as follows is planar:
\begin{enumerate}
  \item $V$ contains a vertex for every literal and one vertex for every clause.
  \item $\Set{x, \widetilde{x}} \in E$ for every $x \in \mathcal{X}$, where $\widetilde{x}$ denotes the negation of boolean variable $x$.
  \item If a claus $C$ contains literal $\overline{x}$ (which can be $x$ or $\widetilde{x}$), there is an edge $\Set{\overline{x}, C}$
  connecting vertex $\overline{x}$ and the vertex corresponding to $C$.
  \item No other edge exists other than those introduced in Part $2$ and Part $3$.
\end{enumerate}
\end{definition}
\begin{example}
\label{exp-planar-cnf}
Consider CNF formula $\varphi = (\widetilde{x} \vee y \vee \widetilde{z})\wedge (x \vee \widetilde{y} \vee z) \wedge (x \vee w \vee z) \wedge (\widetilde{x} \vee \widetilde{w} \vee \widetilde{z})$. Figure~\ref{fig:cnf} represents a planar embedding of the graph $G_{\varphi}$.
\begin{figure}[h]
    \begin{center}
        \includegraphics[scale=0.6]{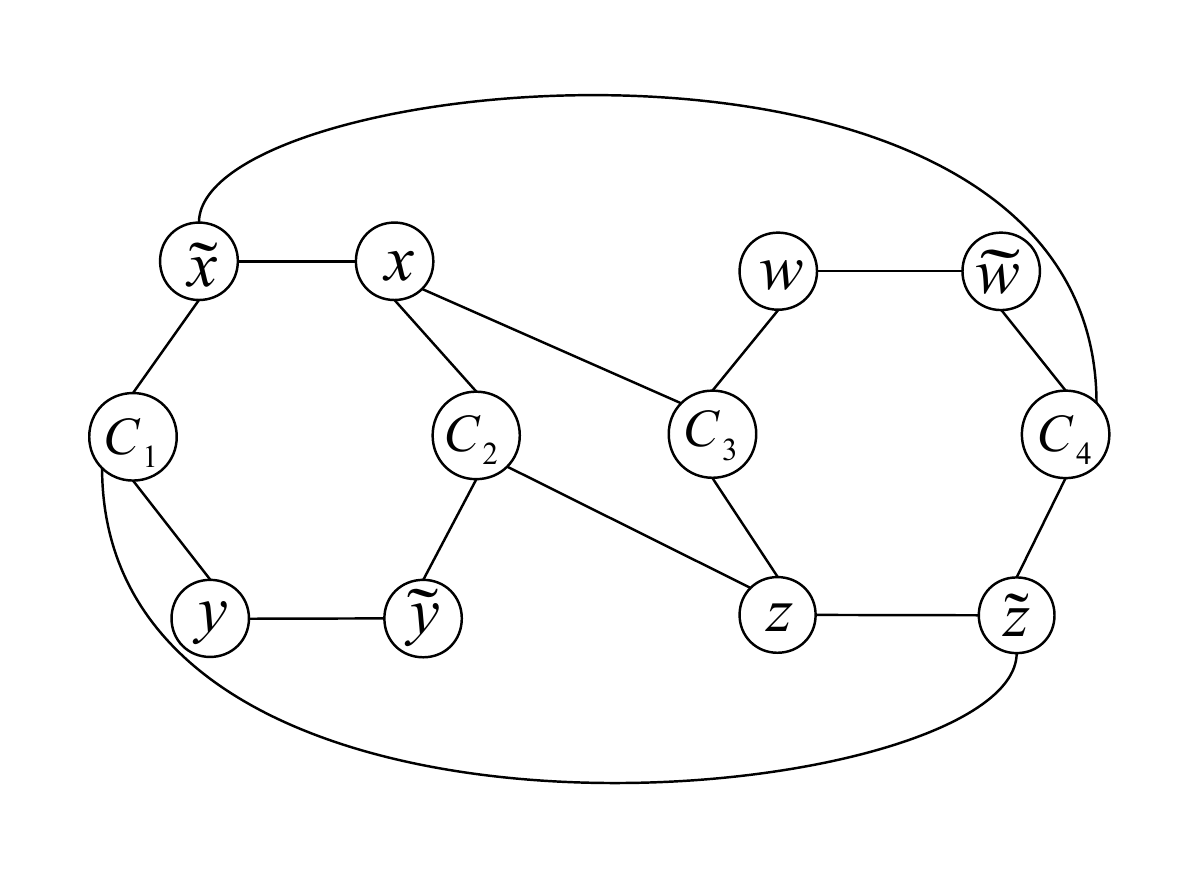}
    \end{center}
    \vspace{-0.4in}
    \caption{Underlying graph of the 3CNF formula $\varphi = (\widetilde{x} \vee y \vee \widetilde{z})\wedge (x \vee \widetilde{y} \vee z) \wedge (x \vee w \vee z) \wedge (\widetilde{x} \vee \widetilde{w} \vee \widetilde{z})$. }
    \label{fig:cnf}
    \vspace{-0.1in}
\end{figure}
\end{example}
\begin{definition}{1-in-3SAT} Given 3CNF formula $\varphi$, 1-in-3SAT is the problem of finding a satisfying assignment such that in each clause, exactly one of the three literals is assigned to $1$.
\end{definition}

\begin{lemma}
\label{lemma:1-in-3SAT}
Strongly Planar 1-in-3SAT is NP-complete.
\end{lemma}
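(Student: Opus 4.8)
The plan is to prove membership in NP and then NP-hardness by a polynomial reduction from a planar variant of 1-in-3SAT that is already known to be NP-complete. Membership is immediate: a candidate assignment is a witness of size linear in $\size{\mathcal{X}}$, and checking that every clause $C \in \mathcal{C}$ has exactly one literal set to $1$ takes linear time. The substance is therefore the hardness direction, and the natural source problem is Planar 1-in-3SAT, whose \emph{variable--clause incidence graph} (one vertex per variable, with edges to the clauses in which the variable or its negation occurs) is planar. I would take as given that this problem is NP-complete.

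Given such an instance $\varphi$ together with a planar embedding of its incidence graph $H$, I would leave the clause set untouched and build $G_\varphi$ exactly as in the definition. The only discrepancy between $H$ and $G_\varphi$ is that each single variable vertex $v$ of $H$ is replaced in $G_\varphi$ by two literal vertices $x,\widetilde{x}$ joined by the edge $\Set{x,\widetilde{x}}$, with the clause edges of the positive occurrences attached to $x$ and those of the negative occurrences attached to $\widetilde{x}$. Hence the whole reduction comes down to one geometric question: can each vertex $v$ of $H$ be \emph{split} into the edge $x\!-\!\widetilde{x}$ while preserving planarity? If, in the rotation (the cyclic order of incident edges) around $v$ in the given embedding, the edges coming from positive occurrences form one contiguous arc and those from negative occurrences form the complementary arc, the split is trivial: place $x$ and $\widetilde{x}$ near $v$, give $x$ the positive arc and $\widetilde{x}$ the negative arc, and draw $\Set{x,\widetilde{x}}$ through the gap. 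No edge is introduced other than one complementary edge and the (re-routed) literal--clause edges, so $G_\varphi$ stays within the admissible edge types and remains planar; and because the logical content of the clauses is unchanged, $\varphi$ is 1-in-3 satisfiable iff the constructed instance is.

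I expect the main obstacle to be precisely the case where, around some $v$, the positive and negative occurrences are \emph{interleaved}, so that no clean two-arc split exists. To repair this I would insert a variable-consistency gadget: replace $v$ by a short chain of literal copies $x_1,\widetilde{x}_1,x_2,\widetilde{x}_2,\dots$ and add auxiliary 1-in-3 clauses that force every $x_i$ to agree and every $\widetilde{x}_i$ to be its complement, so that each individual occurrence may be wired to whichever copy sits at the correct position along the chain, eliminating the crossings. The delicate points — which I would verify carefully — are that this gadget can be realized using only the edge types permitted in a strongly planar graph (complementary edges and literal--clause edges), that it can be laid out inside the face formerly occupied by $v$ without destroying planarity, and that the exactly-one-true semantics really does propagate a single truth value along the chain. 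As a cleaner alternative, if one may instead cite NP-completeness of the \emph{monotone} (positive) planar 1-in-3SAT, then every $\widetilde{x}$ vertex becomes a pendant adjacent only to $x$, so $G_\varphi$ is merely the planar incidence graph with one pendant per variable and is planar outright; this sidesteps the interleaving obstacle entirely and is the route I would try first.
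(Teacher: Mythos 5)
Your proposal is correct in substance, but it takes a genuinely different route from the paper, because the paper in fact contains no proof of Lemma~\ref{lemma:1-in-3SAT} at all: its entire argument is a deferral to the cited reference~\cite{wu2015strongly}, which establishes NP-completeness of strongly planar 1-in-3SAT directly. You instead anchor the hardness at a more classical problem and handle the planarity bookkeeping yourself. Of your two routes, the monotone one is the one that actually closes: citing NP-completeness of monotone (positive) planar 1-in-3SAT (a known result, e.g.\ via cubic planar monotone 1-in-3SAT), every negated-literal vertex $\widetilde{x}$ becomes a pendant attached only to $x$, so $G_{\varphi}$ is just the planar variable--clause incidence graph with one pendant per variable; planarity is immediate and the reduction is the identity map on formulas, so this is a complete proof. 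Your first route (splitting each variable vertex and restoring consistency along a chain of copies) is the standard crossover/consistency technique and can be made to work --- for instance, the clause pair $(x_i \vee \widetilde{x}_{i+1} \vee a)$ and $(\widetilde{x}_i \vee x_{i+1} \vee b)$ forces $x_i = x_{i+1}$ with $a = b = 0$ under exactly-one-true semantics, and these gadgets can be drawn inside the disk vacated by $v$ while the occurrence edges attach from outside --- but as written you flag precisely these points without verifying them, so on its own that branch is a sketch rather than a proof. The trade-off: the paper's one-line citation is shortest and rests on a reference that states this exact lemma, whereas your argument makes the lemma independent of that specialized reference at the cost of still requiring an external NP-completeness anchor (planar or monotone planar 1-in-3SAT) together with the vertex-splitting analysis.
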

For the proof of Lemma~\ref{lemma:1-in-3SAT}, we refer the reader to~\cite{wu2015strongly}.
We use Lemma~\ref{lemma:1-in-3SAT} to present the main contribution of this section (stated in Theorems~\ref{trm:additive-maxflow} and~\ref{thr:max-out-flow}).
Initially, Procedure~\ref{proc:planarSAT-to-flownet} shows three steps in order to transform a CNF formula $\varphi$
into an additive flow network $N_{\varphi}$.
\begin{procedure}{CNF into additive flow network}
\label{proc:planarSAT-to-flownet}
Consider graph $G_{\varphi} = (V, E)$ corresponding to a 3CNF $\varphi$ as an instance of 1-in-3SAT.
Starting from the undirected graph $G_{\varphi}$, additive flow network $N_{\varphi}$ is constructed using the following steps:
\begin{enumerate}
  \item Every undirected edge $\Set{\overline{x}, C}$ connecting  literal $\overline{x}$ (which can be $x$ or $\widetilde{x}$) to clause $C$ is replaced by a directed edge
$e = (x,C)$ from $x$ to $C$ where $u(e) = 1$ and $g(e) = 0$.
  \item For every clause-vertex $C_i$, a sink vertex $t_i$ and an edge $e = (C_i, t_i)$ are introduced, where $u(e) = 1$ and $g(e) = 0$.
  \item Every pair of literal-vertices $\Set{x, \widetilde{x}}$  is replaced by a gadget as shown in Figure~\ref{fig:gadget}.
\end{enumerate}
\begin{figure}[h]
    \begin{center}
        \includegraphics[scale=0.5]{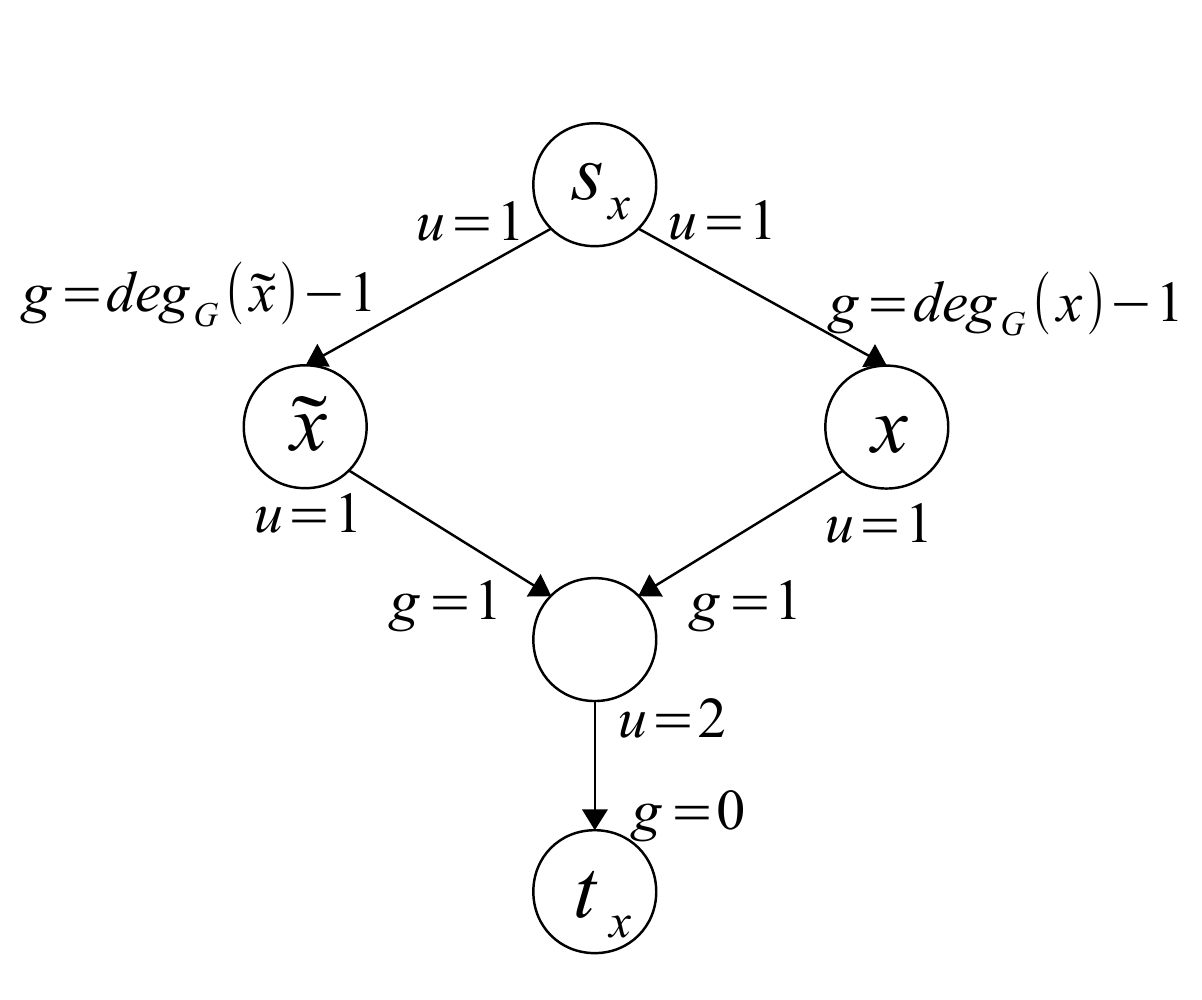}
    \end{center}
    \vspace{-0.3in}
    \caption{The gadget that replaces every pair of literal-vertices $\Set{x, \widetilde{x}}$. The upper bound and the gain or loss of every edge can be found by that edge, where $deg_{G}(x)$ is the degree of literal-vertex $x$ in the graph $G_{\varphi}$. The gadget of
    every pair $\Set{x,\widetilde{x}}$ introduces one source $s_x$ and one sink vertex $t_x$.}
    \label{fig:gadget}
    \vspace{-0.2in}
\end{figure}
\end{procedure}
\begin{example}
The flow network corresponding to the CNF formula $\varphi$ of example~\ref{exp-planar-cnf} is depicted in figure~\ref{fig:cnf-net}.
The missing capacity values and gain factors are $1$ and $0$, respectively. For simplicity some source vertices (also some sink vertices) are combined. It is easy to see that the graph of the constructed network $N_{\varphi}$ is planar iff $G_{\varphi}$ is planar.
\begin{figure}[H]
    \begin{center}
        \includegraphics[scale=0.5]{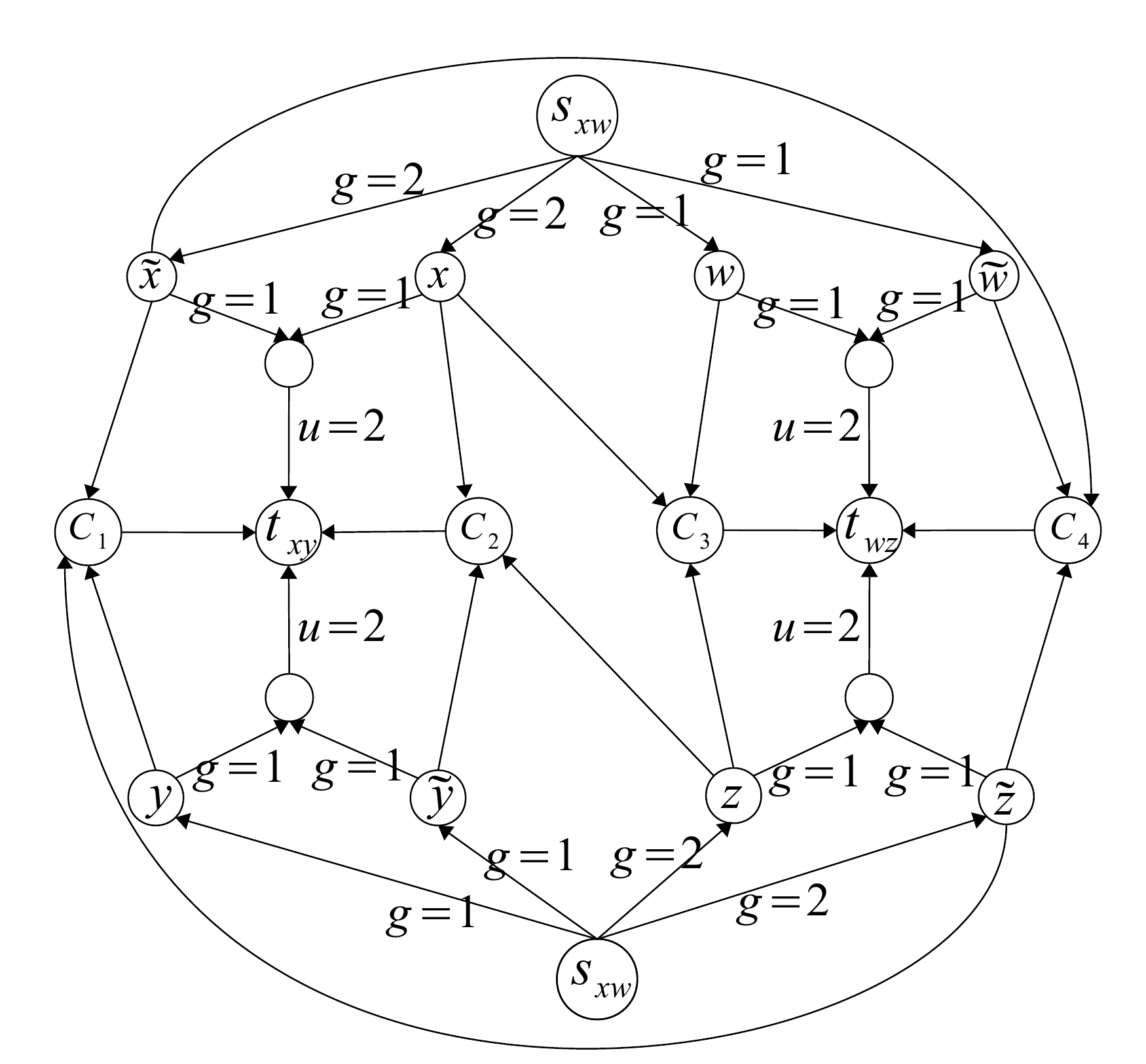}
    \end{center}
    \vspace{-0.2in}
    \caption{The flow network constructed from the graph $G_{\varphi}$ of CNF formula $\varphi = (\widetilde{x} \vee y \vee \widetilde{z})\wedge (x \vee \widetilde{y} \vee z) \wedge (x \vee w \vee z) \wedge (\widetilde{x} \vee \widetilde{w} \vee \widetilde{z})$.}
    \label{fig:cnf-net}
    \vspace{-0.2in}
\end{figure}
\end{example}
\begin{lemma}
\label{lemma:SAF-iff-Maxflow}
CNF formula $\varphi = (\mathcal{X}, \mathcal{C})$ is a positive instance of strongly planar 1-in-3SAT iff
the maximum in-flow in additive flow network $N_{\varphi}$ is $2|\mathcal{X}|+|\mathcal{C}|$, where $N_{\varphi}$ is constructed according to procedure~\ref{proc:planarSAT-to-flownet}.
\end{lemma}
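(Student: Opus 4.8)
The plan is to prove the two directions of the biconditional separately, after first recording the a priori upper bound that no feasible flow can deliver more than $2|\mathcal{X}|+|\mathcal{C}|$ units to the sinks of $N_{\varphi}$. This bound is immediate from the capacities assigned in Procedure~\ref{proc:planarSAT-to-flownet}: every clause-to-sink edge $(C_i,t_i)$ has capacity $1$, so the clause sinks can absorb at most $|\mathcal{C}|$ units, and the gadget of Figure~\ref{fig:gadget} is calibrated so that each pair-sink $t_x$ can absorb at most $2$ units, contributing at most $2|\mathcal{X}|$. Hence the whole lemma reduces to showing that the value $2|\mathcal{X}|+|\mathcal{C}|$ is \emph{attained} exactly when $\varphi$ admits a $1$-in-$3$ assignment.

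\noindent\textbf{($\Rightarrow$)} Given a satisfying $1$-in-$3$ assignment, I would build a flow saturating every sink. In each gadget I activate the literal made true by the assignment and route flow through its branch, sending one unit along each of its clause edges and directing the complementary flow into $t_x$ so that $t_x$ receives its full $2$ units, while the inactive literal carries no flow. Because exactly one literal of every clause is true, each clause-vertex $C_i$ receives exactly one unit and passes it to $t_i$, saturating all $|\mathcal{C}|$ clause sinks. What remains is the routine verification that this flow respects every capacity and the conservation constraint at each internal vertex, and that the additive gains on the activated branch cancel so that no flow is lost; the gain and loss values in Figure~\ref{fig:gadget}, expressed in terms of $deg_G(x)$, are chosen precisely so that an activated literal incurs no net loss. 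This yields in-flow $2|\mathcal{X}|+|\mathcal{C}|$.

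\noindent\textbf{($\Leftarrow$)} Starting from a feasible flow $f$ of value $2|\mathcal{X}|+|\mathcal{C}|$, I would extract an assignment. Tightness of the bound forces every sink to be saturated: each $t_i$ carries exactly $1$ unit and each $t_x$ exactly $2$. The heart of the argument is to show that saturating $t_x$ can only happen if $f$ \emph{commits} the gadget to a single literal, routing flow through $x$ or through $\widetilde{x}$ but not both, so that the rule ``$x$ is true iff literal-vertex $x$ carries flow'' is well defined and consistent. Given commitment, I read off the assignment and argue it is $1$-in-$3$: each clause sink is saturated, so at least one incident literal of every clause is active; and since conservation at a clause-vertex $C_i$ together with the unit capacity of $(C_i,t_i)$ forbids two active incident literals from simultaneously pushing their mandatory unit into $C_i$, at most one literal per clause is true. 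Combining the two gives exactly one.

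The step I expect to be the main obstacle is this commitment claim in the backward direction: proving that a maximizing flow cannot ``hedge'' by splitting between $x$ and $\widetilde{x}$, nor push a partial amount through a literal without saturating its clause edges. This is exactly where the additive structure does the work and where the argument departs from ordinary linear max-flow reasoning, since the loss values in the gadget are applied only on \emph{used} edges and are set so that any such hedging or under-utilization strictly destroys flow and drops the achievable in-flow below $2|\mathcal{X}|+|\mathcal{C}|$. I would therefore isolate a sub-lemma that quantifies, edge by edge through the gadget, the net amount realized under each usage pattern, and show that the target value singles out precisely the all-or-nothing, consistent patterns.
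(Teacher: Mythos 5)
Your proposal follows essentially the same route as the paper's proof: the forward direction constructs the identical flow (push one unit into the true literal's source edge, saturate its clause edges, and let the residual flow deliver the full two units to $t_x$), and the backward direction extracts the assignment by reading off which of $(s_x,x)$, $(s_x,\widetilde{x})$ carries flow and uses saturation of the clause sinks plus the unit capacity of $(C_i,t_i)$ to get exactly one true literal per clause, exactly as the paper does. The ``commitment'' claim you flag as the main obstacle is precisely the property the paper asserts directly from the gadget of Figure~\ref{fig:gadget} (``in every feasible flow at most one of the two edges $(s_x,x)$ and $(s_x,\widetilde{x})$ can be used''), so your plan simply isolates as an explicit sub-lemma what the paper treats as immediate from the construction.
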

\begin{proof}

\noindent
\textbf{($\Rightarrow$)} Consider a satisfying assignment of variables that every clause has exactly one literal with value $1$.
For every literal $\overline{x}$ assigned to $1$ we push $1$ unit flow through the edge connecting a source vertex to the related literal-vertex of $\overline{x}$. The amount of flow that reaches $\overline{x}$ is exactly one unit more than the number of clauses containing $\overline{x}$. This flow saturates all the edge connecting $\overline{x}$ to the vertices of the clauses that contain the literal $\overline{x}$. The remaining one unit flow reaches the sink vertex in the gadget containing the vertex of $\overline{x}$ (plus an extra unit flow gained). Every edge connecting a source to the vertices corresponding to the literals that are assigned $0$, will not be used. It is easy to check that the suggested flow is feasible and all the edges connected to sink vertices are saturated.

\noindent
\textbf{($\Leftarrow$)} We produce a variable assignment for $\varphi$ based on a given feasible $f$ for $N_{\varphi}$. According to the construction of the gadgets for every pair of literals $\Set{x, \widetilde{x}}$ in Figure~\ref{fig:gadget}, in every feasible flow at most one of the two edges $(s_x,x)$ and $(s_x,\widetilde{x})$ can be used (\emph{i.e.} in every feasible flow $f$, for every $x \in X$, $f(x) = 0 \text{ or } f(\widetilde{x}) = 0$). A literal $\overline{x}$ is set to $1$ if $f((s_x, \overline{x})) > 0$, and is set to $0$ otherwise.

Hence, given $N_{\varphi}$ where $f_{in} = 2|\mathcal{X}|+|\mathcal{C}|$, all the edges reaching a sink vertex are saturated. Namely: (i) every edge connecting a clause-vertex to a sink vertex is saturated, which make up $|\mathcal{C}|$ units of flow (\emph{i.e.} corresponding clause is satisfied) and (ii) the remaining $2|\mathcal{X}|$ units of flow is supplied by the sink vertices of all the gadgets (\emph{i.e.} $f((s_x, x)) = 1$ or $f((s_x, \widetilde{x})) = 1$ for every pair $\Set{x, \widetilde{x}}$). Accordingly, if a feasible flow has maximum in-flow $2|\mathcal{X}| + |\mathcal{C}|$, it is the case that for every variable $x$ only one of the literal-vertices $x$ or $\widetilde{x}$ has entering flow, hence in the suggested assignment that literal is set to $1$. Also for every clause $C_i$ one unit flow reaches its corresponding vertex, which means that clause is satisfied and only one of its literal is set to $1$.
\end{proof}
It is not hard to check that Procedure~\ref{proc:planarSAT-to-flownet} can be done in polynomial time in the size of the input CNF formula.
Hence, based on Lemmas~\ref{lemma:1-in-3SAT} and~\ref{lemma:SAF-iff-Maxflow}, the following theorem can be deduced immediately.
\begin{theorem}
\label{trm:additive-maxflow}
computing a feasible flow $f$ with maximum in-flow $f_{in}$ is an NP-hard problem in the strong for the class planar additive flow networks.
\end{theorem}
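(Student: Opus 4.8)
The plan is to assemble the theorem from the two preceding lemmas together with a routine check on the magnitudes of the numbers produced by the construction. First I would observe that Procedure~\ref{proc:planarSAT-to-flownet} transforms any 3CNF formula $\varphi = (\mathcal{X}, \mathcal{C})$ into an additive flow network $N_\varphi$, and that (as noted in the example following the procedure) the underlying graph of $N_\varphi$ is planar exactly when $G_\varphi$ is planar. Hence a \emph{strongly planar} instance of 1-in-3SAT is mapped to a \emph{planar} additive flow network, which is precisely the class for which we want hardness.

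Next I would invoke Lemma~\ref{lemma:SAF-iff-Maxflow}, which supplies the decisive equivalence: the maximum in-flow of $N_\varphi$ equals the target value $2|\mathcal{X}| + |\mathcal{C}|$ if and only if $\varphi$ is a positive instance of 1-in-3SAT. Consequently, any algorithm computing the maximum in-flow (or even deciding whether the network attains this value) would decide strongly planar 1-in-3SAT. Since the latter is NP-complete by Lemma~\ref{lemma:1-in-3SAT}, and since the whole construction runs in polynomial time in $|\varphi|$, the maximum in-flow problem for planar additive flow networks is NP-hard.

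The only point requiring a second look is the claim of hardness \emph{in the strong sense}. For this I would verify that every numerical parameter appearing in $N_\varphi$ is bounded by a polynomial in the input size: each edge has unit capacity, the gains are $0$ or small integers determined by the degrees $\deg_G(x)$ (which never exceed the number of clauses, hence are $O(|\mathcal{C}|)$), and the target in-flow $2|\mathcal{X}| + |\mathcal{C}|$ is linear in the input. Because no exponentially large magnitudes are introduced, the reduction remains valid when all numbers are encoded in unary, which is exactly what upgrades the conclusion from weak to strong NP-hardness.

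I do not anticipate a genuine obstacle, since the substantive combinatorial content already resides in Lemma~\ref{lemma:SAF-iff-Maxflow}; the single place to be careful is confirming that the degrees $\deg_G(x)$, and hence the gains inside the literal gadgets of Figure~\ref{fig:gadget}, stay polynomially bounded. This is the step that distinguishes the present result from the weak NP-hardness inferable from earlier work, so I would state it explicitly rather than leaving it implicit.
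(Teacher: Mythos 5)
Your proposal is correct and follows essentially the same route as the paper: combine Lemma~\ref{lemma:1-in-3SAT} (NP-completeness of strongly planar 1-in-3SAT) with the equivalence of Lemma~\ref{lemma:SAF-iff-Maxflow}, and note that Procedure~\ref{proc:planarSAT-to-flownet} runs in polynomial time and preserves planarity. Your explicit verification that all capacities, gains, and the target value $2|\mathcal{X}|+|\mathcal{C}|$ are polynomially bounded (justifying the ``strong sense'' claim) is a point the paper leaves implicit, and is a worthwhile addition rather than a departure.
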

In the context of max-flow problem,\footnote{Note that in this context cost functions are irrelevant.} for every additive flow network $N = (V, E,S, T, u, g)$ there exists a reversed flow network $N^{\prime} = (V^{\prime}, E^{\prime}, S^{\prime}, T^{\prime}, u^{\prime}, g^{\prime})$, constructed by reversing the direction of every edge and swapping source vertices with sink vertices
(\emph{i.e.} $S^{\prime} = T$ and $T^{\prime} = S$). Given edge $e = (v, u)$ in $E$, we have $e^{\prime} = (u, v) \in E^{\prime}$ where $u^{\prime}(e^{\prime}) = u(e) + g(e)$ and $g(e^{\prime}) = - g(e)$. Hence, if edge $e$ is gainy (lossy) in $N$, $e^{\prime}$ is lossy (gainy) in $N^{\prime}$.

Based on the definition of reversed flow networks, the following lemma is straightforward and the details can be found in~\cite{brandenburg2011shortest}.
\begin{lemma}
\label{lemma:rev-flow-net}
Consider the feasible flow $f$ for an additive flow network $N$, where for every edge $e, g(e) \ge 0$ (in other words, there is no lossy edge in $N$).
Then exists a flow $f^{\prime}$ for the reversed network $N^{\prime}$, where $f_{out}^{\prime} = f_{in}$ (and similarly $f_{in}^{\prime} = f_{out}$).
\end{lemma}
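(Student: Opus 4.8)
The plan is to construct the reversed flow $f'$ explicitly from $f$ and then verify, by direct substitution, that it is feasible on $N'$ and attains the claimed boundary values. The key enabling observation is that since $g(e) \ge 0$ for every edge, no edge is lossy, so for any used edge $e = (v,u)$ we have $\max(0, f(e)+g(e)) = f(e)+g(e)$; hence exactly $f(e)+g(e)$ units reach $u$ whenever $f(e) > 0$. This is precisely the quantity that will travel back across the reversed edge $e' = (u,v)$, whose loss $g(e') = -g(e)$ undoes the original gain.

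First I would define $f'$ on $N'$ by setting, for each edge $e \in E$ with reversed copy $e' \in E'$, $f'(e') = f(e) + g(e)$ when $f(e) > 0$, and $f'(e') = 0$ otherwise. The idea is that $f'(e')$ equals the amount of flow that actually arrived at the head of $e$ in $N$; pushing it back through the lossy reversed edge recovers exactly $f(e)$ at the tail, since for a used edge $\max(0, f'(e') + g'(e')) = \max(0, f(e)+g(e) - g(e)) = f(e)$.

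Next I would check feasibility. The capacity bound $f'(e') \le u'(e') = u(e) + g(e)$ reduces to $f(e) \le u(e)$, which holds because $f$ is feasible; nonnegativity is immediate from $f(e), g(e) \ge 0$. For flow conservation at an internal vertex $v$, I would rewrite the $N'$-conservation equation in terms of original quantities. Because the incoming edges of $v$ in $N'$ are exactly the reversed outgoing edges of $v$ in $N$ (and symmetrically for outgoing edges), the flow arriving at $v$ in $N'$ equals $\sum_{e \in out(v)} f(e)$, while the flow leaving $v$ in $N'$ equals $\sum_{e \in in(v),\, f(e) > 0} (f(e)+g(e))$. The original conservation constraint at $v$, with the $\max$ collapsed via $g \ge 0$, asserts exactly that these two sums are equal, so conservation transfers to $f'$.

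Finally I would compute the boundary flows using $S' = T$ and $T' = S$ together with the edge correspondence. The out-flow of $f'$ sums $f'(e')$ over the reversed copies of edges entering the sinks of $N$, giving $\sum_{t \in T,\, e \in in(t),\, f(e) > 0} (f(e)+g(e)) = f_{in}$; symmetrically, the in-flow of $f'$ recovers $f(e)$ at each reversed copy of an edge leaving a source of $N$, so $f'_{in} = f_{out}$. The only step requiring genuine care is the bookkeeping around the \emph{if used} condition: the gain (resp. loss) applies only on edges carrying positive flow, so the case $f(e) = 0$ must be handled separately to keep $f'$ consistent with the definition of a feasible flow, and one must confirm that the source/sink degree conventions (no incoming edges at sources, no outgoing edges at sinks) are preserved under reversal. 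Beyond this, the argument is a straightforward substitution made possible by the absence of lossy edges.
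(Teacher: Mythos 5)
Your proposal is correct and is precisely the ``straightforward'' argument the paper has in mind: the paper itself omits the proof, deferring to the cited reference on additive flow networks, and your explicit construction $f'(e') = f(e)+g(e)$ for used edges (and $0$ otherwise) supplies exactly the missing details. The three checks you perform --- capacity via $u'(e') = u(e)+g(e)$, conservation via the collapse of $\max(0,\cdot)$ when $g \ge 0$, and the boundary sums under $S'=T$, $T'=S$ --- are all sound, including the careful handling of the ``if used'' condition, which is the only place the argument could have gone wrong.
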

In the construction of $N_{\varphi}$ from $G_{\varphi}$ based on Procedure~\ref{proc:planarSAT-to-flownet}, there is no lossy edge. Hence the following theorem, as an immediate result of lemma~\ref{lemma:rev-flow-net}, concludes this section. The proof is straightforward and left to the reader.
\begin{theorem}
\label{thr:max-out-flow}
In planar additive flow networks, finding feasible flow $f$ with maximum out-flow $f_{out}$ is an NP-hard problem in the strong sense.
\end{theorem}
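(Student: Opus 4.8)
The plan is to obtain the strong NP-hardness of maximum out-flow directly from Theorem~\ref{trm:additive-maxflow} by reversing the very network $N_\varphi$ produced by Procedure~\ref{proc:planarSAT-to-flownet}. First I would record the two structural facts that make Lemma~\ref{lemma:rev-flow-net} applicable: the underlying graph of $N_\varphi$ is planar, and every edge of $N_\varphi$ carries a nonnegative gain, so $N_\varphi$ has no lossy edge. I then form the reversed network $N_\varphi'$. Reversing the orientation of each edge leaves a planar embedding intact, so $N_\varphi'$ is again planar, and the data transformation $u'(e')=u(e)+g(e)$, $g'(e')=-g(e)$ changes capacities and gains only by additions and sign flips, so all numeric data of $N_\varphi'$ stay bounded by a polynomial in $\size{\varphi}$ and the whole construction is polynomial time. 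This is exactly what a \emph{strong} NP-hardness conclusion requires.

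Next I would transport optimal values across the reversal. By Lemma~\ref{lemma:rev-flow-net}, since $N_\varphi$ has no lossy edge, every feasible flow $f$ in $N_\varphi$ induces a feasible flow $f'$ in $N_\varphi'$ with $f'_{out}=f_{in}$, whence $\max f'_{out}\ge\max f_{in}$. Combined with Lemma~\ref{lemma:SAF-iff-Maxflow}, a $1$-in-$3$ satisfying assignment for $\varphi$ yields an in-flow of $2\size{\mathcal{X}}+\size{\mathcal{C}}$ in $N_\varphi$ and therefore an out-flow of $2\size{\mathcal{X}}+\size{\mathcal{C}}$ in $N_\varphi'$. For the matching upper bound I would count the capacity of the edges leaving the sources of $N_\varphi'$ (these sources are precisely the sinks of $N_\varphi$); one checks from the gadget of Procedure~\ref{proc:planarSAT-to-flownet} that each reversed clause-sink edge contributes $1$ and each reversed variable-gadget sink contributes the $2$ units the gadget delivers to $t_x$, so the total is $2\size{\mathcal{X}}+\size{\mathcal{C}}$ and hence $f'_{out}\le 2\size{\mathcal{X}}+\size{\mathcal{C}}$ for every feasible $f'$. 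Thus the maximum out-flow equals $2\size{\mathcal{X}}+\size{\mathcal{C}}$ whenever $\varphi$ is satisfiable.

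The step I expect to be the real obstacle is the converse, namely that a maximum out-flow of $2\size{\mathcal{X}}+\size{\mathcal{C}}$ forces $\varphi$ to be satisfiable. The difficulty is that Lemma~\ref{lemma:rev-flow-net} is one-directional: it assumes no lossy edge, whereas every edge of $N_\varphi'$ has nonpositive gain, so I cannot simply invoke it in reverse. I would instead use the explicit inverse map $f(e)=\max\Set{0,\,f'(e')-g(e)}$ and observe that an out-flow attaining the full source capacity $2\size{\mathcal{X}}+\size{\mathcal{C}}$ saturates every source edge, so this inverse assigns $f(e)=u(e)$ on each sink edge of $N_\varphi$ and back-maps to an in-flow of $2\size{\mathcal{X}}+\size{\mathcal{C}}$. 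The point needing care is that this back-mapped $f$ must be a \emph{feasible} flow of $N_\varphi$: one must rule out that the full source capacity of $N_\varphi'$ could be pushed in ``wastefully'', with flow absorbed on dead-end (fully consuming) edges in a way corresponding to no legal assignment. I would close this by checking that the mutual-exclusion bottlenecks of the gadgets prevent all source edges from being saturated simultaneously unless the flow threads the gadgets consistently, so that any value-$(2\size{\mathcal{X}}+\size{\mathcal{C}})$ out-flow is wasteless and its back-image respects conservation in $N_\varphi$. Once feasibility of the back-image is secured, $\max f_{in}=2\size{\mathcal{X}}+\size{\mathcal{C}}$ and Lemma~\ref{lemma:SAF-iff-Maxflow} returns a $1$-in-$3$ witness, completing the reduction and hence the proof.
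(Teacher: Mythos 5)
Your proposal takes essentially the same route as the paper: the paper proves Theorem~\ref{thr:max-out-flow} simply by noting that $N_\varphi$ has no lossy edge and invoking Lemma~\ref{lemma:rev-flow-net} on the reversed network $N_\varphi'$, declaring the remaining details ``straightforward and left to the reader.'' Your write-up is in fact more careful than the paper's one-line argument, since you correctly flag that Lemma~\ref{lemma:rev-flow-net} is one-directional (the reversed network \emph{does} have lossy edges, so the lemma cannot be applied to $N_\varphi'$ to get back to $N_\varphi$) and you sketch the back-mapping and wastelessness analysis needed for the converse direction --- precisely the gap the paper leaves unaddressed.
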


\section{Conclusion and future work}
\label{sect:future}
  In this report we investigated the max-flow and shortest path problems
for flow networks with additive gains and losts when the underlying
graph is planar.  In Sections~\ref{sect:shortestPath}
and~\ref{sect:maxFlow}, we show that both problems are NP-hard in
the strong sense for planar additive flow networks, \emph{i.e.} even when
all the values of cost, gain and capacity functions assigned to every
edge are bounded by polynomials in the size of the input network.

Hence, there is the question of existence of approximation algorithms
for any of the two problems.  To our best knowledge, no approximation
algorithm has yet been suggested for any of those problems for
additive flow networks (with or without any restriction on the
structure of the underlying graph).

The other question to investigate is the existence of polynomial time
algorithms when some input parameters are fixed. For instance, based
on the notion of outerplanarity, every planar graph is $k$-outerplanar
for some integer $k \ge 1$.  In~\cite{BestKfoury:dsl11,
SouleBestKfouryLapets:eoolt11,kfoury2013different} the authors
introduce and study a compositional framework for the analysis of flow
networks (based on a so-called \emph{Theory of Network Typings}); 
based on this framework in~\cite{kfoury2013compositional}, a linear time
algorithm (with respect to the number of vertices) for max-flow
problem in $k$-outerplanar graphs is suggested, when $k$ is fixed.  We
believe that, with some minor modifications in the suggested
framework, the same result can be achieved for max-flow problems in
additive flow networks.  The Theory of Network Typings
proposes an algebraic approach for flow networks that allows a
compositional analysis of flow based on polyhedral
computations.  As defined so far, this framework does not account
for the presence of cost functions on the flow.  Hence, another problem
left for future investigation is the problem of incorporating cost
functions in that framework thereby allowing a compositional
analysis of the shortest path problem in additive flow networks.


\Hide
{\footnotesize
\printbibliography
}

{\footnotesize 
\bibliographystyle{plainurl} 
\bibliography{bibs}
}

\ifTR
\else
\fi

\end{document}